\newif\iffull
  \setlist[itemize]{leftmargin=*}
  \setlist[enumerate]{leftmargin=*}
\crefname{algocf}{Algorithm}{Algorithms}
\Crefname{algocf}{Algorithm}{Algorithms}
\newcommand{\doclearpage}{%
\iffull
\clearpage
\else
\fi
}
\newcommand{\keywords}[1]{\bigskip\par\noindent{\footnotesize\textbf{Keywords\/}: #1}}
\newcommand{\defemph}[1]{\textbf{\emph{#1}}}
\newcommand{\FormatAuthor}[3]{
\begin{tabular}{c}
#1 \\ {\small\texttt{#2}} \\ {\small #3}
\end{tabular}
}
\theoremstyle{plain} 
\newtheorem{theorem}{Theorem}[section]
\newtheorem{lemma}[theorem]{Lemma}
\newtheorem{claim}[theorem]{Claim}
\newtheorem{definition}[theorem]{Definition}
\newtheorem{observation}[theorem]{Observation}
\newtheorem{notation}[theorem]{Notation}
\newtheorem{stheorem}{Theorem}
\theoremstyle{definition} 
\newtheorem{remark}[theorem]{Remark}
\theoremstyle{remark} 
\DeclareMathOperator{\poly}{poly}
\DeclareMathOperator{\polylog}{polylog}
\DeclareMathOperator{\E}{\mathbb{E}}
\DeclareMathOperator{\Var}{\mathbf{Var}}
\newcommand{\seq}{\subseteq}
\newcommand{\eps}{\epsilon}
\newcommand{\abs}[1]{|#1|}
\newcommand{\dist}{\mathrm{dist}}
\renewcommand{\P}{\mathcal{P}}
\newcommand{\C}{\mathcal{C}}
\newcommand{\coord}{k}
\newcommand{\Bits}{\{0,1\}}
\newcommand{\N}{\mathbb{N}}
\newcommand{\Field}{\mathbb{F}}
\newcommand{\F}{\Field}
\newcommand{\SubF}{\mathbb{H}}
\renewcommand{\H}{{\SubF}}
\newcommand{\RM}{\mathsf{RM}}
\renewcommand{\deg}{d}
\newcommand{\DefineEqual}{\coloneqq}
\newcommand{\floor}[1]{\lfloor #1 \rfloor}
\newcommand{\ceil}[1]{\lceil #1 \rceil}
\newcommand{\FormatComplexityClass}[1]{\mathbf{#1}}
\newcommand{\PClass}{\FormatComplexityClass{P}}
\newcommand{\Proof}{\pi}
\newcommand{\Code}{C}
\renewcommand{\message}{M}
\newcommand{\inputword}{w}
\newcommand{\codeword}{Q}
\newcommand{\ClosestCodeword}{W^*}
\newcommand{\Word}{f}
\newcommand{\BaseCode}{{\RM(\RMDim, \RMDeg)}}
\newcommand{\TensorDim}{\RMDim}
\newcommand{\ComposedCode}{\Code_{comp}}
\newcommand{\CodeDim}{K}
\newcommand{\BlockLength}{N}
\newcommand{\RSdeg}{d}
\newcommand{\RSlen}{n}
\newcommand{\CodeLen}{n}
\newcommand{\Decoder}{{\mathcal D}}
\newcommand{\DistCode}{\delta}
\newcommand{\Point}{\vecx}
\newcommand{\Line}[2]{{\ell_{#1, #2}}}
\newcommand{\Plane}[3]{{\P_{#1, #2, #3}}}
\newcommand{\Index}{\vecx}
\newcommand{\Language}{L}
\newcommand{\Robust}{\rho}
\newcommand{\Soundness}{\epsilon}
\newcommand{\RMDist}{\delta_{\RM}}
\newcommand{\DistRM}{\RMDist}
\newcommand{\DecRad}{{\tau_{\rm cor}}}
\newcommand{\TestRadius}{\tau}
\newcommand{\Randomness}{r}
\newcommand{\Query}{q}
\newcommand{\QueryRLCC}{\Query_{RLCC}}
\newcommand{\SoundnessRW}{\Soundness_{RW}}
\newcommand{\SoundnessPCPP}{\Soundness_{PCPP}}
\newcommand{\QueryPCPP}{\Query_{\rm PCPP}}
\newcommand{\SoundnessInnerRLCC}{\Soundness_{inRLCC}}
\newcommand{\SoundnessRLCC}{\Soundness_{RLCC}}
\newcommand{\CTRW}{{\sf CTRW}}
\newcommand{\RWSteps}{r}
\newcommand{\Wclose}{W_{\rm close}}
\newcommand{\textdef}[1]{\emph{#1}}
\newcommand{\vecx}{{\vec{x}}}
\newcommand{\vecy}{{\vec{y}}}
\newcommand{\vecz}{{\vec{z}}}
\newcommand{\veca}{{\vec{a}}}
\newcommand{\vech}{{\vec{h}}}
\newcommand{\MessLen}{k}
\newcommand{\RMDeg}{d}
\newcommand{\RMLen}{n}
\newcommand{\RMDim}{m}
\newcommand{\ProofPart}{\Pi}
\newcommand{\NumRep}{t}
\newcommand{\RMPart}{{\RM}^{rep}}
\newcommand{\PCPPLen}{len}
\newcommand{\Subroutine}{\CTRW-Test}
\begin{document}

\title{Relaxed Locally Correctable Codes with Improved Parameters}

\author{
\begin{tabular}[h!]{ccc}
 & \FormatAuthor{Vahid R. Asadi}{vasadi@sfu.ca}{Simon Fraser University}
 & \FormatAuthor{Igor Shinkar}{ishinkar@sfu.ca}{Simon Fraser University}
\end{tabular}
}

\iffull
  \date{\today}
\else
  \date{}
\fi

\maketitle

\begin{abstract}

Locally decodable codes (LDCs) are error-correcting codes $\Code \colon \Sigma^\MessLen \to \Sigma^\CodeLen$
that admit a local decoding algorithm that recovers each individual bit of the message
by querying only a few bits from a noisy codeword.
An important question in this line of research is to understand the optimal trade-off between the query complexity of LDCs and their block length.
Despite importance of these objects, the best known constructions of constant query LDCs have super-polynomial length,
and there is a significant gap between the best constructions and the known lower bounds in terms of the block length.

For many applications it suffices to consider the weaker notion of \emph{relaxed LDCs (RLDCs)},
which allows the local decoding algorithm to abort if by querying a few bits it detects that the input is not a codeword.
This relaxation turned out to allow decoding algorithms with constant query complexity for codes with \emph{almost linear} length.
Specifically, \cite{BGHSV06} constructed an $O(\Query)$-query RLDC that encodes a message of length $\MessLen$ using a codeword of block length $\CodeLen = O(\MessLen^{1+1/\sqrt{\Query}})$.

In this work we improve the parameters of \cite{BGHSV06} by constructing an $O(\Query)$-query RLDC
that encodes a message of length $\MessLen$ using a codeword of block length $O(\MessLen^{1+1/{\Query}})$.
This construction matches (up to a multiplicative constant factor) the lower bounds of \cite{KatzTrevisan00, Woodruff07} for constant query \emph{LDCs},
thus making progress toward understanding the gap between LDCs and RLDCs in the constant query regime.

In fact, our construction extends to the stronger notion of relaxed locally \emph{correctable} codes (RLCCs), introduced in \cite{GurRR18},
where given a noisy codeword the correcting algorithm either recovers each individual bit of the codeword
by only reading a small part of the input, or aborts if the input is detected to be corrupt.

\keywords{algorithmic coding theory; consistency test using random walk; reed-muller code; relaxed locally decodable codes; relaxed locally correctable codes}

\end{abstract}

\clearpage
\setcounter{tocdepth}{2}
{\footnotesize \tableofcontents}

\clearpage
\section{Introduction}
\label{sec:intro}
\emph{Locally decodable codes (LDCs)} are error-correcting codes that admit a decoding algorithm that recovers each specific symbol of the message by reading a small number of locations in a possibly corrupted codeword. More precisely, a locally decodable code $\Code \colon \F^\MessLen \to \F^\CodeLen$ with local decoding radius $\TestRadius \in [0,1]$ is an error-correcting code that admits a local decoding algorithm $\Decoder_{\Code}$, such that given an index $i \in [\MessLen]$ and a corrupted word $w \in \F^\CodeLen$ which is $\TestRadius$-close to an encoding of some message $\Code(\message)$, reads a small number of symbols from $w$, and outputs $\message_{i}$ with high probability.
Similarly, we have the notion of \emph{locally correctable codes (LCCs)}, which are error-correcting codes that not only admit a local algorithm that decode each symbol of the message, but are also required to correct an arbitrary symbol from the entire codeword.
Locally decodable and locally correctable codes have many applications in different areas of theoretical computer science, such as complexity theory, coding theory, property testing, cryptography, and construction of probabilistically checkable proof systems. For details, see the surveys~\cite{Yekhanin12,KS17} and the references within.

Despite the importance of LDCs and LCCs, and the extensive amount of research studying these objects, the best known construction of constant query LDCs has super-polynomial length $\CodeLen = \exp(\exp(\log^{\Omega(1)}(\MessLen)))$, which is achieved by the highly non-trivial constructions of \cite{Yekhanin08} and \cite{Efremenko12}. For constant query LCCs, the best known constructions are of exponential length, which can be achieved by some parameterization of Reed-Muller codes.
It is important to note that there is huge gap between the best known lower bounds for the length of constant query LDCs and the length of best known constructions. Currently, the best known lower bound on the length of LDCs says that for $\Query \geq 3$ it must be at least $\MessLen^{1 + \Omega(1/\Query)}$, where $\Query$ stands for the query complexity of the local decoder. See \cite{KatzTrevisan00,Woodruff07} for the best general lower bounds for constant query LDCs.

Motivated by applications to probabilistically checkable proofs (PCPs), Ben-Sasson, Goldreich, Harsha, Sudan, and Vadhan introduced in \cite{BGHSV06} the notion of \emph{relaxed locally decodable codes (RLDCs)}.
Informally speaking, a relaxed locally decodable code is an error-correcting code which allows the local decoding algorithm to abort if the input codeword is corrupt, but does not allow it to err with high probability.
In particular, the decoding algorithm should always output correct symbol, if the given word is not corrupted.
Formally, a code $\Code \colon \F^\MessLen \to \F^\CodeLen$ is an RLDC with decoding radius $\TestRadius \in [0,1]$
if it admits a relaxed local decoding algorithm $\Decoder_{\Code}$
which given an index $i \in [\MessLen]$ and a possibly corrupted codeword $w \in \F^\CodeLen$, makes a small number of queries to $w$, and satisfies the following properties.
\begin{description}
  \item[Completeness:] If $w = \Code(\message)$ for some $\message \in \F^\MessLen$, then $\Decoder_{\Code}^{w}(i)$ should output $\message_i$.
  \item[Relaxed decoding:] If $w$ is $\TestRadius$-close to some codeword $\Code(\message)$,
  then $\Decoder_{\Code}^{w}(i)$ should output either $\message_i$ or a special \emph{abort} symbol with probability at least 2/3.
\end{description}
This relaxation turns out to be very helpful in terms of constructing RLDCs with better block length.
Indeed, \cite{BGHSV06} constructed of a $\Query$-query RLDC with block length $\CodeLen = \MessLen^{1+O(1/\sqrt{\Query})}$.

\medskip
The notion of \emph{relaxed LCCs (RLCCs)}, recently introduced in \cite{GurRR18}, naturally extends the notion of RLDCs.
These are error-correcting codes that admit a correcting algorithm that is required to correct every symbol of the codeword,
but is allowed to abort if noticing that the given word is corrupt.
More formally, the local correcting algorithm gets an index $i \in [\CodeLen]$,
and a (possibly corrupted) word $w \in \F^\CodeLen$, makes a small number of queries to $w$,
and satisfies the following properties.
\begin{description}
  \item[Completeness:] If $w \in \Code$, then $\Decoder_{\Code}^{w}(i)$ should output $w_i$.
  \item[Relaxed correcting:] If $w$ is $\TestRadius$-close to some codeword $c^* \in \Code$,
  then $\Decoder_{\Code}^{w}(i)$ should output either $c^*_i$ or a special \emph{abort} symbol with probability at least 2/3.
\end{description}
Note that if the code $\Code$ is systematic, i.e., the encoding of any message $\message \in \F^\MessLen$
contains $\message$ in its first $\MessLen$ symbols, then the notion of RLCC is stronger than RLDC.

Recently, building on the ideas from~\cite{GurRR18}, \cite{CGS20} constructed RLCCs whose block length matches the RLDC construction
of \cite{BGHSV06}.
For the lower bounds, the only result we are aware of is the work of Gur and Lachish \cite{GurL19},
who proved that for any RLDC the block length must be at least $\CodeLen = \MessLen^{1+\Omega(1/{\Query^2})}$.

\medskip
Given the gap between the best constructions and the known lower bounds, it is natural to ask the following question:

\medskip
\fbox{
What is the best possible trade-off between the query complexity and the block length of an RLDC?
}
\medskip

In particular, \cite{BGHSV06} asked whether it is possible to obtain a $\Query$-query RLDC whose
block length is strictly smaller than the best known lower bound on the length of LDCs.
A positive answer to their question would show a separation between the two notions,
thus proving that the relaxation is \emph{strict}.
See paragraph \emph{Open Problem} in the end of Section 4.2 of \cite{BGHSV06}.

In this work we make progress on this problem by constructing a relaxed locally decodable code
$\Code \colon \Field^\CodeDim \to \Field^\BlockLength$ with
query complexity $O(\Query)$ and block length $\CodeDim^{1+O(1/\Query)}$.
In fact, our construction gives the stronger notion of a relaxed locally correctable code.

\begin{stheorem}[Main Theorem]\label{thm:main}
For every $\Query \in \N$ there exists an $O(\Query)$-query relaxed locally correctable code
$\Code \colon \Bits^\CodeDim \to \Bits^\BlockLength$ with constant relative distance and constant decoding radius,
such that the block length of $\Code$ is
\begin{equation*}
	\BlockLength=\Query^{O(\Query^2)} \cdot \CodeDim^{1+O(1/\Query)} \enspace.
\end{equation*}
\end{stheorem}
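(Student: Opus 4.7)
The plan is to build $\Code$ as an $m$-dimensional tensor power of a Reed--Muller code $\RM(m,d)$ evaluated over a field $\F$, augmented with probabilistically checkable proofs of proximity (PCPPs) attached to every axis-parallel line, and to decode via a random-walk-based consistency test ($\CTRW$) layered on top of standard Reed--Muller local self-correction. By choosing the tensor dimension $m = \Theta(\Query)$, individual degree $d = \Theta(m)$, and field size $|\F| = O(d)$, the evaluation table has length roughly $|\F|^m$, and balancing the message length $\CodeDim \approx \binom{d+m}{m}$ against this yields $\BlockLength \approx |\F|^m = \CodeDim^{1 + O(1/m)} = \CodeDim^{1 + O(1/\Query)}$. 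The PCPPs for the axis lines--each line being a univariate polynomial of degree at most $d$--contribute only a lower-order blow-up absorbed into the $\Query^{O(\Query^2)}$ prefactor.

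For the local correcting algorithm on index $i$ corresponding to a point $x \in \F^m$: first I would use standard Reed--Muller self-correction (pick a random axis-aligned line through $x$, query the purported codeword on that line, and interpolate), and then verify that the nearby structure of the codeword is consistent with a genuine codeword. This is where the improvement over \cite{BGHSV06} enters: rather than perform a single axis-line PCPP check, the $\CTRW$ subroutine performs a short random walk of constant length $\RWSteps$ through $\F^m$, stepping along random axis directions at each stage, and invokes a PCPP on each visited line. Because the walk diffuses across the combinatorial hypercube of lines, the detection probability scales with the density of corrupted lines encountered, rather than requiring any single axis line to be heavily corrupt. Setting each individual PCPP to have query complexity $O(1)$ and the walk length $\RWSteps = O(1)$, the total query complexity is $O(\Query)$.

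The main technical obstacle is the soundness analysis of the $\CTRW$ test: given that $w$ is $\TestRadius$-close to a codeword $c^*$, one must show that with probability at least $2/3$ either the algorithm outputs $c^*_i$ or the test aborts. To establish this, I would classify points $y \in \F^m$ by whether the axis lines through $y$ agree with $c^*$ on almost all coordinates, and use a hybrid argument combining the local testability of the tensor product code with the PCPP soundness to bound the measure of points from which the random walk fails to detect corruption. The key quantitative inequality needed is that the fraction of \emph{bad} axis lines through a typical point is at least a constant multiple of the global disagreement between $w$ and $c^*$, so that a walk of constant length hits a bad line with constant probability. Finally, I would verify that the soundness losses per PCPP invocation compose well across the walk without destroying the completeness/soundness gap, which pins down the constants in the construction and yields the stated parameters.
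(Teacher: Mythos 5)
There is a genuine gap, and it sits exactly where the paper's main new idea lives. Your test walks along \emph{axis-parallel lines} for a \emph{constant} number of steps $\RWSteps=O(1)$. The soundness of any $\CTRW$-style argument hinges on the walk ending at a (near-)uniformly random point of $\F^{\RMDim}$: the only global hypothesis is that $w$ is $\TestRadius$-close to $c^*$ in Hamming distance, so one can only conclude that the endpoint is uncorrupted if the endpoint distribution is close to uniform. An axis-parallel step randomizes a single coordinate, so after $O(1)$ steps the endpoint is confined to a union of $O(1)$-dimensional axis-parallel slices through $\vecx$, of measure $n^{O(1)}/n^{\RMDim}$; an adversary corrupts exactly that region and your detection probability collapses. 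Fixing this by taking $\RWSteps=\Omega(\RMDim)$ axis-parallel steps puts you back in the framework of \cite{CGS20}, whose accounting yields only $\BlockLength=\CodeDim^{1+O(1/\sqrt{\Query})}$. The improvement to $\CodeDim^{1+O(1/\Query)}$ in the paper does not come from ``doing a random walk instead of a single check'' (CGS20 already does a random walk); it comes from a specific algebraic walk on \emph{subfield planes}: one fixes $\H$ with $[\F:\H]=\RMDim$, attaches PCPPs only to $\H$-planes $\{\veca+t\vech+s\vech'\colon t,s\in\F\}$ with $\vech,\vech'\in\H^{\RMDim}$ (there are only $n^{\RMDim+O(1)}$ of these, which is what keeps the proof part short), and walks plane--line--plane. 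The key lemma (\cref{claim:mixing-time-key-claim}) shows $\sum_i t_i\vech_i$ with $t_i\in\F$, $\vech_i\in\H^{\RMDim}$ is $2/|\H|$-close to uniform after exactly $\RMDim$ steps, via a random-matrix-product argument. Your proposal contains neither the subfield structure nor any substitute for it, so the claimed block length is not supported.

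Two further problems. First, robustness of the composition: for the PCPP on step $i$ to detect the inconsistency propagated from step $i-1$, the overlap between consecutive constraints must carry \emph{constant weight} in the distance measure seen by the verifier. Consecutive axis-parallel lines meet in a single point, which is a $1/n$ fraction of the local view, so a PCPP with constant proximity parameter cannot see the inconsistency. The paper handles this by making consecutive constraints \emph{planes} that intersect in a \emph{line}, and by the repetition/weighting device ($\Word_{\mid\P}^{(\vecx)}=\Word_{\mid\P}\circ(\Word(\vecx))^{n^2}$ and $\Word_{\mid\P}^{(\ell)}=\Word_{\mid\P}\circ(\Word_{\mid\ell})^{n}$, i.e., the distances $\dist_{\vecx}$ and $\dist_{\ell}$ of \cref{def:weighted-distance}), together with the fact that two distinct bivariate degree-$\RMDeg$ polynomials disagree on most of any line on which they differ. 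Your ``hybrid argument'' sketch does not address this, and without it the per-step soundness does not compose. Second, smaller issues: interpolating $w$ along a whole line costs $n=\Theta(\CodeDim^{1/\Query})$ queries, which is not allowed --- a \emph{relaxed} corrector should simply output $w(\vecx)$ or $\bot$, never interpolate; and your parameter choice $\RMDeg=\Theta(\RMDim)$ with $|\F|=O(\RMDeg)$ gives $\CodeDim=\binom{\RMDeg+\RMDim}{\RMDim}=2^{O(\RMDim)}=O(1)$, i.e., a constant-length message. The correct regime fixes $\RMDim=\Theta(\Query)$ and lets $\RMDeg$ grow with $\CodeDim$ (the paper takes $\RMDeg\geq 16^{\RMDim}$ and $|\H|\approx(4\RMDeg)^{1/\RMDim}$).
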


\noindent
Therefore, our construction improves the parameters of the $O(\Query)$-query RLDC
construction of \cite{BGHSV06} with block length $\BlockLength=\CodeDim^{1+O(\sqrt{1/\Query})}$,
and matches (up to a multiplicative factor in $\Query$) the lower bound of $\Omega(\CodeDim^{1+\frac{1}{\ceil{\Query/2}-1}})$
for the block length of $\Query$-query LDCs~\cite{KatzTrevisan00, Woodruff07}.

\begin{remark}
    In this paper we prove \cref{thm:main} for a code $\Code \colon \Field^\CodeDim \to \Field^\BlockLength$ over a large alphabet.
    Specifically, we show a code $\Code \colon \Field^\CodeDim \to \Field^\BlockLength$
    satisfying \cref{thm:main}, for a finite field $\Field$ satisfying $\abs{\Field} \geq c_q \cdot \CodeDim^{1/\Query}$,
    for some $c_\Query \in \N$ that depends only on $\Query$.

    Using the techniques from \cite{CGS20} it is not difficult to obtain an
    RLCC over the binary alphabet with almost the same block length.
    Indeed, this can be done by concatenating our code over large alphabet
    with an arbitrary binary code with constant rate and constant relative distance.
    See \cref{sec:conclusions-open-problems} for details.
\end{remark}


\subsection{Related works}
\parhead{RLDC and RLCC constructions:} Relaxed locally decodable codes, were first introduced by \cite{BGHSV06},
motivated by applications to constructing short PCPs.
Their construction has a block length equal to $\BlockLength = \CodeDim^{1 + O(1/\sqrt{\Query})}$.
Since that work, there were no constructions with better block length, in the constant query complexity regime   .
Recently, \cite{GurRR18} introduced the related notion of relaxed locally correctable codes (RLCCs),
and constructed $\Query$-query RLCCs with block length $\BlockLength = \poly(\CodeDim)$.
Then, \cite{CGS20} constructed relaxed locally correctable codes with block length
matching that of \cite{BGHSV06} (up to a multiplicative constant factor $\Query$).
The construction of \cite{CGS20} had two main components, that we also use in the current work.
\begin{description}
  \item[Consistency test using random walk ($\CTRW$):] Informally, given a word $w$, and a coordinate $i$ we wish to correct,
  $\CTRW$ samples a sequence of constraints $\C_1,\C_2,\dots,\C_t$ on $w$, such that the domains of $\C_i$ and $\C_{i+1}$ intersect, with the guarantee that
  if $w$ is close to some codeword $c^* \in \Code$, but $w_i \neq c^*_i$, then with high probability $w$
  will be far from satisfying at least one of the constraints.
  In other words, $\CTRW$ performs a random walk on the constraints graph and checks if $w$ is consistent with $c^*$ in the $i$'th coordinate.
  We introduce this notion in detail in \cref{sec:overview-CTRW},
  and prove that the Reed-Muller code admits a $\CTRW$ in \cref{sec:CTRW-RM}.

  \item[Correctable canonical PCPPs (ccPCPP):] These are PCPP systems for some specified language $\Language$
  satisfying the following properties:
  \begin{inparaenum}[(i)]
  \item for each $w \in \Language$ there is a unique proof $\pi(w)$ that satisfies the verifier with probability 1,
  \item the verifier accepts with high probability only pairs $(x, \pi)$ that are
  close to some $(w, \pi(w))$ for some $w \in \Language$, i.e., only the pairs where $x$ is close to some $w \in \Language$,
  and $\pi$ is close to $\pi(w)$, and
  \item the set $\{w \circ \pi_w : w \in \Language\}$ is an RLCC.
\end{inparaenum}
Canonical proofs of proximity have been studies in~\cite{DinurGG18, Paradise20}.
We elaborate on these constructions in \cref{sec:PCPP}.
\end{description}

\medskip
\parhead{Lower bounds:} For lower bounds, the only bound we are aware of is that of \cite{GurL19}, who proved that any $\Query$-query relaxed locally decodable code must have a block length $\BlockLength \geq \CodeDim^{1+\Omega(\frac{1}{\Query^2})}$.

For the strict notion of locally decodable codes, it is known by \cite{KatzTrevisan00, Woodruff07}
that for $\Query \geq 3$ any $\Query$-query LDC must have block length $\BlockLength \geq \Omega(\CodeDim^{1+\frac{1}{\ceil{\Query/2}-1}})$.
For $\Query=3$ a slightly stronger bound of $\BlockLength \geq \Omega(\CodeDim^2/\log(\CodeDim))$ is known,
and furthermore, for $3$-query \emph{linear} LDC the block length must be $\BlockLength \geq \Omega(\CodeDim^2/\log\log(\CodeDim))$~\cite{Woodruff07}.
For $\Query=2$ \cite{KdW03} proved an exponential lower bound of $\BlockLength \geq \exp(\Omega(\CodeDim))$.
See also \cite{DJKLR02, GKST02, Oba02, WdW05, Woodruff10} for more related work on lower bounds for LDCs.

\subsection{Organization}
The rest of the paper is organized as follows.
In \cref{sec:proof-overview}, we informally discuss the construction and the correcting algorithm.
In this discussion we focus on decoding the symbols corresponding to the message, i.e., on showing that the code is an RLDC.
\cref{sec:preliminaries} introduces the formal definitions and notations we will use in the proof of \cref{thm:main}.
We present the notion of consistency test using random walk in \cref{sec:CTRW-RM}, and prove that the Reed-Muller code admits such test.
In \cref{sec:PCPP} we present the PCPPs we will use in our construction, and state the properties needed for the correcting algorithm.
In \cref{sec:composition-local-alg} we prove \cref{thm:main}
by proving a composition theorem, which combines the instantiation of the Reed-Muller code with PCPPs from the previous sections.

\doclearpage
\section{Proof overview}
\label{sec:proof-overview}
In this section we informally describe our code construction.
Roughly speaking, our construction consists of two parts:
\begin{description}
  \item [The Reed-Muller encoding:] Given a message $\message \in \Field^\CodeDim$, its Reed-Muller encoding
  is the evaluation of an $\RMDim$-variate polynomial of degree at most $\RMDeg$ over $\Field$,
  whose coefficients are determined by the message we wish to encode.
  \item [Proofs of proximity:] The second part of the encoding consists of the concatenation of PCPPs,
  each claiming that a certain restriction of the first part agrees with some Reed-Muller codeword.
\end{description}
Specifically, given a message $\message \in \F^\CodeDim$, we first encode it using the Reed-Muller encoding $\RM_{\F}(\RMDim, \deg)$,
where $\RMDim$ roughly corresponds to the query complexity of our RLDC, and the field is large enough so that the distance of
the Reed-Muller code, which is equal to $1-\frac{\RMDeg}{\abs{\Field}}$, is some constant, say $3/4$.
That is, the first part of the encoding corresponds to an evaluation of some
polynomial $f \colon \Field^\RMDim \to \Field$ of degree at most $\RMDeg$.
The second part of the encoding consists of a sequence of PCPPs claiming that the restrictions of a the Reed-Muller part
to some carefully chosen planes in $\Field^\RMDim$ are evaluations of some low-degree polynomial.

The planes we choose are of the form $\Plane{\veca}{\vech}{\vech'} = \{\veca + t \cdot \vech + s \cdot \vech' : t,s \in \Field\}$,
where $\veca \in \Field^\RMDim$, and $\vech, \vech' \in \H^\RMDim$ for some $\H$ subfield of $\Field$.
We will call such planes \defemph{$\H$-planes}.
In order to obtain the RLDC with the desired parameters, we choose the field $\H$ so that $\Field$ is the extension of $\H$ of degree $[\Field:\H] = \RMDim$.
It will be convenient to think of $\H$ as a field and think of $\Field$ as a vector space of $\H$ of dimension $\RMDim$
(augmented with the multiplicative structure on $\Field$).
Indeed, the saving in the block length of the RLDC we obtain
crucially relies on the fact that we ask for PCPPs for only a small collection of planes,
and not all planes in $\Field^\RMDim$.
The actual constraints required to be certified by the PCPPs are slightly more complicated, and we describe the next.

The constraints of the first type correspond to $\H$-planes $\P$ and points $\vecx \in \P$.
For each such pair $(\P,\vecx)$ the code will contain a PCPP certifying that
\begin{inparaenum}[(i)]
\item the restriction of the Reed-Muller part to $\P$ is close to an evaluation of some polynomial of total degree at most $\RMDeg$,
\item and furthermore, this polynomial agrees with the value of the Reed-Muller part on $\vecx$.
\end{inparaenum}
In order to define it formally, we introduce the following notation.
\begin{notation}\label{notation:f_P^x}
    Let $\Field$ be a finite field of size $\RMLen$.
    Fix $\Word \colon \Field^\RMDim \to \Field$, a plane $\P$ in $\Field^\RMDim$, and a point $\vecx \in \P$.
    Denote $\Word_{\mid \P}^{(\vecx)} = \Word_{\mid \P} \circ (\Word(\vecx))^{\RMLen^2}$.
    That is, the length of $\Word_{\mid \P}^{(\vecx)}$ is $2 \cdot \RMLen^2$,
    and it consists of $\Word_{\mid \P}$ concatenated with $\RMLen^2$ repetitions of $\Word(\vecx)$.
\end{notation}
Given the notation above, if $\Word$ is the first part of the codeword, corresponding to the Reed-Muller encoding of the message,
then the PCPP for the pair $(\P,\vecx)$ is expected to be the proof of proximity
claiming that $\Word_{\mid \P}^{(\vecx)}$ is close to the language
\begin{equation}\label{def:RM_P^x}
   \RM_{\mid \P}^{(\vecx)} = \{Q \circ (Q(\vecx))^{(\RMLen^2)} : \mbox{$Q$ is the evaluation of a degree-$\RMDeg$ polynomial on $\P$} \} \seq \Field^{2\RMLen^2}
   \enspace.
\end{equation}
Note that by repeating the symbol $Q(\vecx)$ for $\RMLen^2$ times, the definition indeed puts weight 1/2 on the constraint that
the input $\Word_{\mid \P}$ is close to some low-degree polynomial $Q$, and puts weight 1/2 of the constraint $\Word(\vecx) = Q(\vecx)$.
In particular, if $\Word_{\mid \P}$ is $\delta$-close to some bivariate low degree polynomial $Q$
for some small $\delta>0$, but $\Word(\vecx) \neq Q(\vecx)$, then $\Word_{\mid \P}$ is at least $(1 - \frac{\RMDeg}{\abs{\Field}}-\delta)/2$-far from any bivariate low degree polynomial on $\P$.

The constraints of second type correspond to $\H$-planes $\P$ and lines $\ell \seq \P$.
For each such pair $(\P,\ell)$ the code will contain a PCPP certifying that
\begin{inparaenum}[(i)]
\item the restriction of the Reed-Muller part to $\P$ is close to an evaluation of some polynomial of total degree at most $\RMDeg$,
\item and furthermore, this polynomial is close to $\Word_{\mid \ell}$.
\end{inparaenum}
(In particular, this implies that $\Word_{\mid \ell}$ is close to some low-degree polynomial.)

\medskip
Next, we introduce the notation analogous to \cref{notation:f_P^x} replacing the points with lines.
\begin{notation}\label{notation:f_P^ell}
    Let $\Field$ be a finite field of size $\RMLen$.
    Fix $\Word \colon \Field^\RMDim \to \Field$, a plane $\P$ in $\Field^\RMDim$, and a line $\ell \seq \P$.
    Denote by $\Word_{\mid \P}^{(\ell)} = \Word_{\mid \P} \circ (\Word_{\mid \ell})^{\RMLen}$.
    That is, the length of $\Word_{\mid \P}^{(\ell)}$ is $2 \cdot \RMLen^2$,
    and it consists of $\Word_{\mid \P}$ concatenated with $\RMLen$ repetitions of $\Word_{\mid \ell}$.
\end{notation}
If $\Word$ is the Reed-Muller part of the codeword, corresponding to the Reed-Muller encoding of the message,
then the PCPP for the pair $(\P,\ell)$ is expected to be the proof of proximity
claiming that $\Word_{\mid \P}^{(\ell)}$ is close to the language
\begin{equation}\label{def:RM_P^ell}
   \RM_{\mid \P}^{(\ell)} = \{Q \circ (Q_{\mid \ell})^{\RMLen} :
            \mbox{$Q$ is the evaluation of some degree-$\RMDeg$ polynomial on $\P$} \} \seq \Field^{2\RMLen^2}
   \enspace.
\end{equation}
Again, similarly to the first part, repeating the evaluation of $Q_{\mid \ell}$ for $\RMLen$ times puts weight 1/2 on the constraint that
the input $\Word_{\mid \P}$ is a close to some low-degree polynomial $Q$, and puts weight 1/2 of the constraint $\Word_{\mid \ell}$ is close to $Q_{\mid \ell}$.

\medskip

With the proofs specified above, we now sketch the local correcting algorithm for the code.
Below we only focus on correcting symbols from the Reed-Muller part. Correcting the symbols from the PCPP part
follows a rather straightforward adaptation of the techniques from \cite{CGS20},
and we omit them from the overview.

Given a word $w \in \Field^\BlockLength$ and an index $i \in [\BlockLength]$ of $w$ corresponding to the Reed-Muller part
of the codeword,
let $\Word \colon \Field^\RMDim \to \Field$ be the Reed-Muller part of $w$,
and let $\vecx \in \Field^\RMDim$ be the input to $\Word$ corresponding to the index $i$.
The local decoder works in two steps.
\begin{description}
\item[Consistency test using random walk:] In the first step the correcting algorithm
    invokes a procedure we call \emph{consistency test using a random walk ($\CTRW$)} for the Reed-Muller code.
    This step creates a sequence of $\H$-planes of length $(\RMDim+1)$,
    where each plane defines a constraint checking that the restriction of $w$ to the plane is low-degree.
    Hence, we get $\RMDim+1$ constraints, each depending on $\RMLen^2$ symbols.

\item[Composition using proofs of proximity:]
    Then, instead of reading the entire plane for each constraint, we use the PCPPs from the second part of the
    codeword to reduce the arity of each constraint to $O(1)$,
    thus reducing the total query complexity of the correcting algorithm to $\Query = O(\RMDim)$.
    That is, for each constraint we invoke the corresponding PCPP verifier to check
    that the restrictions of $\Word$ to each of these planes is (close to) a low-degree polynomial.
    If at least one of the verifiers rejects, then the word $\Word$ must be corrupt, and hence the correcting algorithm returns $\bot$.
    Otherwise, if all the PCPP verifiers accept, the correcting algorithm returns $\Word(\vecx)$.
\end{description}

In particular, if $\Word$ is a correct Reed-Muller encoding, then the algorithm will always return $\Word(\vecx)$,
and the main part of the analysis is to show that if $\Word$ is close to some $Q^* \in \RM_{\F}(\RMDim, \deg)$,
but $\Word(\vecx) \neq Q^*(\vecx)$, then the correcting algorithm catches an inconsistency,
and returns $\bot$ with some constant probability. See \cref{sec:correction-alg-RM} for details.

\medskip

The key step in the analysis says that if $\Word$ is close to some codeword $Q^* \in \RM$ but $\Word(\vecx) \neq Q^*(\vecx)$,
then with high probability $\Word$ will be far from a low degree polynomial on at least one of these planes,
where ``far'' corresponds to the notion of distances defined by the languages $\RM_{\mid \P}^{(\vecx)}$ and $\RM_{\mid \P}^{(\ell)}$.
In particular, if on one of the planes $\Word$ is far from the corresponding language,
then the PCPP verifier will catch this with constant probability,
thus causing the correcting algorithm to return $\bot$.
We discuss this part in detail below.

\medskip

It is important to emphasize that the main focus of this work is constructing a correcting algorithm for the Reed-Muller part.
Using the techniques developed in \cite{CGS20}, it is rather straightforward to design the
algorithm for correcting symbols from the PCPPs part of the code.
See \cref{sec:correction-alg-PCPP} for details.

\subsection{$\CTRW$ on Reed-Muller codes}
\label{sec:overview-CTRW}
Below we define the notion of \emph{consistency test using random walk ($\CTRW$)} for the Reed-Muller code.
This notion is a slight modification of the notion originally defined in \cite{CGS20} for general codes.
In this paper we define it only for the Reed-Muller code.
Given a word $\Word \colon \Field^\RMDim \to \Field$ and some $\Index \in \Field^\RMDim $,
the goal of the test is to make sure that $\Word(\Index)$
is consistent with the codeword of Reed-Muller code closest to $\Word$.
\cite{CGS20} describe a $\CTRW$ for the tensor power $\Code^{\otimes m}$ of an arbitrary codes $\Code$ with good distance (e.g., Reed-Solomon).
The $\CTRW$ they describe works by starting from the point we wish to correct,
and choosing an axis-parallel line $\ell_1$ containing the starting point.
The test continues by choosing a sequence of random axis-parallel lines $\ell_2,\ell_3,\dots\ell_t$,
such that each $\ell_i$ intersects the previous one, $\ell_{i-1}$,
until reaching a uniformly random coordinate of the tensor code.
That is, the length of the sequence $t$ denotes the \emph{mixing time of the corresponding random walk}.
The predicates are defined in the natural way;
namely, the test expects to see a codeword of $\Code$ on each line it reads.

In this work, we present a $\CTRW$ for the Reed-Muller code,
which is a variant of the $\CTRW$ described above.
The main differences compared to the description above are that
\begin{inparaenum}[(i)]
    \item the test chooses a sequence of planes $\P_1,\P_3, \dots \P_t$ (and not lines),
    \item and every two planes intersect on a line (and not on a point).
\end{inparaenum}
Roughly speaking, the algorithm works as follows.

\begin{enumerate}
  \item Given a point $\Index \in \Field^\RMDim$ the test picks a uniformly random $\H$-plane $\P_0$ containing $\Index$.

  \item Given $\P_0$, the test chooses a random line $\ell_1 \seq \P_0$, and then chooses another random $\H$-plane $\P_1 \seq \Field^\RMDim$ containing $\ell_1$.

  \item Given $\P_1$, the test chooses a random line $\ell_2 \seq \P_1$, and then chooses another random $\H$-plane $\P_2 \seq \Field^\RMDim$ containing $\ell_2$.

  \item The algorithm continues for some predefined number of iterations, choosing $\P_0,\P_1,\P_2, \dots \P_\NumRep$.
        Roughly speaking, the number of iterations is equal to the mixing time of the corresponding Markov chain.
        More specifically, the process continues until a uniformly random point in $\P_\NumRep$ is close to a uniform point in $\Field^\RMDim$.

  \item The constraints defined for each $\P_i$ are the natural constraints;
        namely checking that the restriction of $\Word$ to $\P_i$ is a polynomial of degree at most $\RMDeg$.
\end{enumerate}

One of the important parameters, directly affecting the query complexity of our construction is the mixing time of the random walk.
Indeed, as explained above, the query complexity of our RLDC is proportional to the mixing time of the random walk.
We prove that if $[\Field:\H] = \RMDim$, then the mixing time is upper bounded by $\RMDim$.
In order to prove this we use the following claim, saying that if
$\F$ is the field extension of $\H$ of degree $\TensorDim$,
and $\vech_1,\dots,\vech_\TensorDim \in \H^\TensorDim$ and $t_1,\dots,t_\TensorDim \in \Field$ are
sampled uniformly, independently from each other,
then $\sum_{i=1}^\TensorDim t_i \cdot \vech_i$ is close to a uniformly random point in $\Field^\RMDim$.
See \cref{claim:mixing-time-key-claim} for the exact statement.

\medskip

As explained above, the key step of the analysis is to prove that if $\Word$ is close to some codeword $Q^* \in \RM$ but $\Word(\vecx) \neq Q^*(\vecx)$,
then with high probability at least one of the predicates defined will be violated.
Specifically, we prove that with high probability the violation will be in the following strong sense.

\begin{theorem}[informal, see \cref{thm:H-ctrw}]
\label{thm:H-ctrw-informal}
  If $\Word$ is close to some codeword $Q^* \in \RM$ but $\Word(\vecx) \neq Q^*(\vecx)$,
  then with high probability
  \begin{enumerate}
    \item either $\Word_{\mid \P_0}^{(\Index)}$ is $\Omega(1)$-far from $\RM_{\mid \P_0}^{(\Index)}$,
    \item or $\Word_{\mid \P_i}^{(\ell_i)}$ is $\Omega(1)$-far from $\RM_{\mid \P_i}^{(\ell_i)}$ for some $i \in [\RMDim]$.
  \end{enumerate}
\end{theorem}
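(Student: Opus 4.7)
The plan is to assume for contradiction that every distance event in the conclusion fails simultaneously, extract a chain of bivariate polynomials from the (near-)satisfactions, and derive a contradiction at the final plane by mixing.

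First, from each closeness guarantee I would extract a low-degree polynomial. Inspecting \cref{notation:f_P^x}, if $\Word_{\mid \P_0}^{(\Index)}$ is close to $\RM_{\mid \P_0}^{(\Index)}$ then the $\RMLen^2$-fold repetition of the value at $\Index$ forces $P_0(\Index)=\Word(\Index)$ exactly, while $\Word_{\mid \P_0}$ is close to the degree-$\RMDeg$ polynomial $P_0$. Similarly, from each closeness $\Word_{\mid \P_i}^{(\ell_i)}$ close to $\RM_{\mid \P_i}^{(\ell_i)}$ (\cref{notation:f_P^ell}) I obtain a degree-$\RMDeg$ polynomial $P_i$ on $\P_i$ for which both $\Word_{\mid \P_i}$ is close to $P_i$ and $\Word_{\mid \ell_i}$ is close to $P_i|_{\ell_i}$.

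Next, I would show these polynomials chain: $P_{i-1}|_{\ell_i} = P_i|_{\ell_i}$ for every $i \in [\RMDim]$. Since $\ell_i$ is sampled uniformly inside $\P_{i-1}$, Markov applied to the closeness of $\Word_{\mid \P_{i-1}}$ to $P_{i-1}$ gives that, with high probability over $\ell_i$, $\Word_{\mid \ell_i}$ is also close to $P_{i-1}|_{\ell_i}$. Combined with $\Word_{\mid \ell_i}$ close to $P_i|_{\ell_i}$, both restrictions are degree-$\RMDeg$ univariate polynomials close to the same word on $\ell_i$, so unique decoding of Reed-Solomon identifies them. With this in hand, I would propagate the disagreement with $Q^*$ forward along the walk. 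At step $0$, $P_0(\Index)=\Word(\Index)\neq Q^*(\Index)$, so $P_0 \neq Q^*|_{\P_0}$ as bivariate polynomials. Inductively, if $P_{i-1}\neq Q^*|_{\P_{i-1}}$ then their difference is a nonzero bivariate polynomial of degree $\leq \RMDeg$, whose zero locus contains at most $\RMDeg$ lines; hence with probability $1-O(\RMDeg/\abs{\Field}^2)$ over $\ell_i$ we have $P_{i-1}|_{\ell_i}\neq Q^*|_{\ell_i}$, and by the chain identity $P_i|_{\ell_i}=P_{i-1}|_{\ell_i}\neq Q^*|_{\ell_i}$, forcing $P_i\neq Q^*|_{\P_i}$.

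Finally, the mixing of the walk closes the contradiction at $\P_\RMDim$. By \cref{claim:mixing-time-key-claim}, after $\RMDim$ steps the distribution of $\P_\RMDim$ is statistically close to a uniform $\H$-plane in $\Field^\RMDim$, so Markov on the global closeness of $\Word$ to $Q^*$ gives that $\Word_{\mid \P_\RMDim}$ is close to $Q^*|_{\P_\RMDim}$ with high probability; together with its closeness to $P_\RMDim$, unique decoding on the plane forces $P_\RMDim = Q^*|_{\P_\RMDim}$, contradicting the propagated inequality. A union bound over the $\RMDim$ steps and the mixing error yields the theorem. The main obstacle I anticipate is parameter balance: each of the $\RMDim$ steps contributes a small Markov slack (for the chain identity) and a small chain-breaking event (for the forward propagation), and these must compound without degrading the $\Omega(1)$ distance quantifier of the conclusion, while the mixing argument of \cref{claim:mixing-time-key-claim} has to already supply a sufficient quantitative guarantee after exactly $\RMDim$ steps when $[\Field:\H]=\RMDim$.
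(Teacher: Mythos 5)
Your proposal is correct in substance but follows a genuinely different route from the paper. The paper first reduces to $Q^*=\mathbf{0}$ by linearity and then runs a pure counting argument: it defines, for each step, the event $E_i$ that $\Word_{\mid\ell_i}$ has many non-zeros while $\Word_{\mid\P_i}$ has few (which certifies $\dist_{\ell_i}(\Word_{\mid\P_i},\RM_{\mid\P_i})\geq\Robust$) and the complementary event $F_i$ that both are dense, shows via \cref{lem:sampling} that the non-zero density propagates from $\P_0$ to $\P_\RMDim$ unless some $E_i$ fires, and derives the contradiction at $\P_\RMDim$ from \cref{claim:mixing-time-key-claim} applied to a random point of $\P_\RMDim$ --- it never names the decoded polynomials on intermediate planes. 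You instead run the classical plane-chaining argument: decode $P_i$ on each plane, glue $P_{i-1}$ to $P_i$ along $\ell_i$ by unique decoding (using the same sampling lemma), propagate $P_i\neq Q^*_{\mid\P_i}$ via the fact that a nonzero bivariate degree-$\RMDeg$ polynomial vanishes on at most $\RMDeg$ lines, and close at $\P_\RMDim$ with the same mixing claim. Your route is more transparent about \emph{why} the test certifies consistency, at the cost of the extra zero-locus step and an extra $O(\RMDim\RMDeg/\abs{\Field})$ union-bound term; the paper's route avoids identifying the polynomials and streamlines bookkeeping via the $Q^*=\mathbf{0}$ reduction, and both land at essentially the same soundness $1-O(\RMDim/\abs{\Field})-\frac{\TestRadius+2/\abs{\SubF}}{\RMDist-2\Robust}$.

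Two quantitative caveats in your write-up. First, \cref{claim:mixing-time-key-claim} does \emph{not} give that the plane $\P_\RMDim$ is statistically close to a uniform $\H$-plane; it only controls the marginal of a uniformly random \emph{point} of $\P_\RMDim$ (namely $\Pr[\Word(\vecz)\neq Q^*(\vecz)]\leq\TestRadius+2/\abs{\SubF}$), but that weaker statement is exactly what your Markov step needs. Second, that Markov step does not succeed ``with high probability'': to force $P_\RMDim=Q^*_{\mid\P_\RMDim}$ by unique decoding you need $\dist(\Word_{\mid\P_\RMDim},Q^*_{\mid\P_\RMDim})<\RMDist-2\Robust$, and Markov only gives this with probability $1-\frac{\TestRadius+2/\abs{\SubF}}{\RMDist-2\Robust}$, which for $\TestRadius=\RMDist/2$ and $\Robust=\RMDist/8$ is roughly $1/3$. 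This is a constant bounded away from $1$, not a high probability --- but it is precisely the dominant loss in the paper's soundness expression as well, and is consistent with the $\Omega(1)$ guarantee the (informal and formal) theorem actually asserts. So the obstacle you flag at the end is real but benign.
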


Indeed, this strong notion of violation allows us to use the proofs of proximity in order
to reduce the query complexity to $O(1)$ queries for each $i \in [\RMDim]$.
We discuss proofs of proximity next.

\subsection{PCPs of proximity and composition}
The second building block we use in this work is the notion of \emph{probabilistic checkable proofs of proximity (PCPPs)}.
PCPPs were first introduced in \cite{BGHSV06} and \cite{DinurR04}.
Informally speaking, a PCPP verifier for a language $\Language$,
gets an oracle access to an input $x$ and a proof $\pi$ claiming that $x$ is close to some element of $\Language$.
The verifier queries $x$ and $\pi$ in some small number of (random) locations, and decides whether to accept or reject.
The completeness and soundness properties of a PCPP are as follows.
\begin{description}
  \item[Completeness:] If $x \in \Language$, then there exists a proof causing the verifier to accept with probability 1.
  \item[Soundness:] If $x$ is far from $\Language$, then no proof can make the verifier to accept with probability more than 1/2.
\end{description}
In fact, we will use the slightly stronger notion of \emph{canonical PCPP (cPCPP) systems}.
These are PCPP systems satisfying the following completeness and soundness properties.
For completeness, we demand that for each $w$ in the language there is a unique \emph{canonical} proof $\pi(w)$
that causes the verifier to accept with probability 1.
For soundness, the demand is that the only pairs $(x,\pi)$ that are accepted by the verifier with
high probability are those where $x$ is close to some $w \in \Language$ and $\pi$ is close to $\pi(w)$.
Such proof system have been studies in~\cite{DinurGG18, Paradise20},
who proved that such proof systems exist for every language in $\P$.

Furthermore, for our purposes we will demand a stronger notion of {correctable canonical PCPP systems (ccPCPP)}.
These are canonical PCPP systems where the set $\{w \circ \pi^*(w) : w \in \Language\}$ is a $\Query$-query RLCC for some parameter $\Query$,
with $\pi^*(w)$ denoting the canonical proof for $w \in \Language$.
It was shown in~\cite{CGS20} how to construct ccPCPP by combining a cPCPP system with \emph{any} systematic RLCC.
Informally speaking, for every $w \in \Language$, and its canonical proof $\pi(w)$,
we define $\pi^*(w)$ by encoding $w \circ \pi(w)$ using a systematic RLCC.
The verifier for the new proof system is defined in a straightforward manner.
See \cite{CGS20} for details.

\medskip

The PCPPs we use throughout this work, are the proofs of two types, certifying that
\begin{enumerate}
\item $\Word_{\mid \P}^{(\vecx)}$ is close to $\RM_{\mid \P}^{(\vecx)}$ for some plane $\P$ and some $\vecx \in \P$, and
\item $\Word_{\mid \P}^{(\ell)}$ is close to $\RM_{\mid \P}^{(\ell)}$ for some plane $\P$ and some line $\ell \seq \P$.
\end{enumerate}

Indeed, it is easy to see that the first type of proofs checks that
\begin{inparaenum}[(i)]
\item the restriction of $\Word$ to $\P$ is close to an evaluation of some polynomial $Q^*$ of total degree at most $\RMDeg$,
\item and $\Word(\vecx) = Q^*(\vecx)$.
\end{inparaenum}
Similarly, the second type proof certifies that
\begin{inparaenum}[(i)]
\item the restriction of $\Word$ to $\P$ is close to an evaluation of some polynomial $Q^*$ of total degree at most $\RMDeg$,
\item and $\Word_{\mid \ell}$ is close to $Q^*_{\mid \ell}$.
\end{inparaenum}

These notions of distance go together well with the guarantees we have for $\CTRW$ in \cref{thm:H-ctrw-informal}.
This allows us to \emph{compose} $\CTRW$ with the PCPPs to obtain a correcting algorithm with query complexity $\Query = O(\RMDim)$.
Informally speaking, the composition theorem works as follows.
We first run the $\CTRW$ to obtain a collection of $\RMDim+1$ constraints on the planes $\P_0,\P_1,\dots,\P_\RMDim$.
By \cref{thm:H-ctrw-informal}, we have the guarantee that with high probability
either $\Word_{\mid \P_0}^{(\Index)}$ is $\Omega(1)$-far from $\RM_{\mid \P_0}^{(\Index)}$,
or $\Word_{\mid \P_i}^{(\ell_i)}$ is $\Omega(1)$-far from $\RM_{\mid \P_i}^{(\ell_i)}$ for some $i \in [\RMDim]$.
Then, instead of actually reading the values of $\Word$ on all these planes,
we run the PCPP verifier on $\Word_{\mid \P_0}^{(\Index)}$ to check that it is close to $\RM_{\mid \P_0}^{(\Index)}$,
and running the PCPP verifier on each of the $\Word_{\mid \P_i}^{(\ell_i)}$ to check that they are close to $\RM_{\mid \P_i}^{(\ell_i)}$.
Each execution of the PCPP verifier makes $O(1)$ queries to $\Word$ and to the proof, and thus the total query complexity will be indeed $O(\RMDim)$.
As for soundness, if $\Word_{\mid \P_0}^{(\Index)}$ is $\Omega(1)$-far from $\RM_{\mid \P_0}^{(\Index)}$,
or $\Word_{\mid \P_i}^{(\ell_i)}$ is $\Omega(1)$-far from $\RM_{\mid \P_i}^{(\ell_i)}$ for some $i \in [\RMDim]$,
then the corresponding verifier will notice an inconsistency with constant probability, causing the decoder to output $\bot$.

We discuss proofs of proximity in \cref{sec:PCPP}.
The composition is discussed in \cref{sec:composition-local-alg}.

\doclearpage
\section{Preliminaries}
\label{sec:preliminaries}
We begin with standard notation.
The relative distance between two strings $x, y \in \Sigma^n$ is defined as
\begin{equation*}
    \dist(x,y) \DefineEqual \frac{\left| \left\{ i \in [n] : x_i \neq y_i \right\} \right|}{n}
    \enspace.
\end{equation*}
If $\dist(x,y) \leq \eps$, we say that $x$ is \emph{$\eps$-close} to $y$; otherwise we say that $x$ is \emph{$\eps$-far} from $y$.
For a non-empty set $S \subseteq \Sigma^n$ define the distance of $x$ from $S$ as $\dist(x,S) \DefineEqual \min_{y \in S} \dist(x,y)$.
If $\dist(x,S) \leq \eps$, we say that $x$ is \emph{$\eps$-close} to $S$; otherwise we say that $x$ is \emph{$\eps$-far} from $S$.

We will also need a more general notion of a distance, allowing different coordinates to have different weight.
In particular, we will need the distance that gives constant weight to a particular subset of the coordinates,
and spreads the rest of the weight uniformly between all coordinates.
\begin{definition}\label{def:weighted-distance}
Fix $n \in \N$ and an alphabet $\Sigma$.
For a set $A \seq [n]$ define the distance $\dist_{A}$ between two strings $x,y \in \Sigma^n$ as
\begin{equation*}
    \dist_{A}(x,y) = \frac{\abs{\{i \in A : x_i \neq y_i\}}}{2\abs{A}} + \frac{\abs{\{i \in [n] : x_i \neq y_i\}}}{2n}
    \enspace.
\end{equation*}
In particular, if $x$ differs from $y$ on $\delta \abs{A}$ coordinates in $A$,
then $\dist_{A}(x,y)$ is at least $\frac{\delta}{2} + \frac{\delta\abs{A}}{2n}$.

\medskip
\noindent
We define $\dist_A$ between a string $x \in \Sigma^n$ and a set $S \seq \Sigma^n$ as
\begin{equation*}
    \dist_A(x,S) = \min_{y \in S}\dist_A(x,y)
    \enspace.
\end{equation*}
\end{definition}

\begin{remark}
This definition generalizes the definition of \cite{CGS20} of $\dist_\coord$ for a coordinate $\coord \in [n]$.
Indeed, the notion of $\dist_\coord$ for a coordinate $\coord \in [n]$ corresponds to the singleton set $A = \{\coord\}$.
\end{remark}
When the set $A$ is a singleton $A = \{\coord\}$ we will write $\dist_\coord(x,y)$ to denote $\dist_{\{\coord\}}(x,y)$,
and we will write $\dist_\coord(x,S)$ to denote $\dist_{\{\coord\}}(x,S)$.



\subsection{Basic coding theory}
\label{sec:basic-coding-theory}

Let $k < n$ be positive integers, and let $\Sigma$ be an alphabet.
An \textdef{error correcting code} $\Code \colon \Sigma^k \to \Sigma^n$ is an \emph{injective} mapping from messages of length $k$ over the alphabet $\Sigma$ to codewords of length $n$.
The parameter $k$ is called the \emph{message length} of the code, and $n$ is its \emph{block length} (which we view as a function of $k$).
The \emph{rate} of the code is defined as $k/n$,
and the \emph{relative distance} of the code is defined as $\min_{\message \neq \message' \in \Sigma^k} \dist(\Code(\message),\Code(\message'))$.
We sometimes abuse the notation and use $\Code$ to denote the set of all of its codewords,
i.e., identify the code with $\{\Code(\message): \message \in \Sigma^k\} \subseteq \Sigma^n$.

\parhead{Linear codes}
Let $\Field$ be a finite field. A code $\Code \colon \Field^k \to \Field^n$ is \emph{linear} if it is an $\Field$-linear map from $\Field^k$ to $\Field^n$. In this case the set of codewords
$\Code$ is a subspace of $\Field^n$, and the message length of $\Code$ is also the dimension of the subspace.
It is a standard fact that for any linear code $\Code$,
the relative distance of $\Code$ is equal to $\min_{x \in \Code \setminus \{0^n\}} \dist(x,0^n)$.

\subsection{Reed-Muller codes}
\label{sec:rm-code}
Reed-Muller codes \cite{Muller1954} are among the most well studied error correcting codes, with many
theoretical and practical applications
in different areas of computer science and information theory.
Let $\F$ be a finite field of order $\abs{\Field} = \RMLen$, and let $\RMDeg$ and $\RMDim$ be integers.
The code $\RM_\Field(\RMDim, \RMDeg)$ is the linear code whose codewords
are the evaluations of polynomials $\Word \colon \Field^\RMDim \to \Field$ of total degree at most $\RMDeg$ over $\Field$.
We will allow ourselves to write $\RM(\RMDim, \RMDeg)$, since the field is fixed throughout the paper.
We will also sometimes omit the parameters $\RMDim$ and $\RMDeg$, and simply write $\RM$, when the parameters are clear from the context.

In this paper we consider the setting of parameters where $\RMDeg < \abs{\Field} = \RMLen$.
It is well known that for $\RMDeg < \RMLen$ the relative distance of $\RM_\Field(\RMDim, \RMDeg)$ is $1 - \frac{\RMDeg}{\RMLen}$.
The dimension of $\RM$ can be computed by counting the number of monomials of total degree at most $\RMDeg$.
For $\RMDeg < \RMLen$ the number of such monomials is $\binom{\RMDeg + \RMDim}{\RMDim} \geq (\frac{\RMDeg+\RMDim}{\RMDim})^{\RMDim} > (\frac{\RMDeg}{\RMDim})^{\RMDim}$.
Since the length of each codeword is $\RMLen^{\RMDim}$,
it follows that the rate of the code is $\frac{\binom{\RMDeg + \RMDim}{\RMDeg}}{\RMLen^{\RMDim}} > (\frac{\RMDeg}{\RMDim \RMLen})^{\RMDim}$.

\begin{definition}\label{def:line-and-plane}
For $\vecx, \vecy \in \F^\TensorDim$ denote by $\Line{\vecx}{\vecy}$ the line
\begin{equation*}
    \Line{\vecx}{\vecy} = \{\vecx + t \cdot \vecy : t \in \F\}
    \enspace.
\end{equation*}

Also, for $\vecx, \vecy, \vecz \in \F^\TensorDim$ denote by $\Plane{\vecx}{\vecy}{\vecz}$ the plane
\begin{equation*}
    \Plane{\vecx}{\vecy}{\vecz} = \{\vecx + t \cdot \vecy + s \cdot \vecz : t,s \in \F\}
    \enspace.
\end{equation*}
\end{definition}

An important property of $\RM(\RMDim, \RMDeg)$ (and multivariate low-degree polynomials, in general)
that we use throughout this work is that their restrictions to lines and planes in $\F^{\RMDim}$
are also polynomials of degree at most $\RMDeg$. In other words,
if $\Word \in \RM(\RMDim, \RMDeg)$, and $\ell$ is a line ($\P$ is a plane) in $\Field^\RMDim$,
then the restriction of $\Word$ to $\ell$ (or to $\P$) is a codeword of the Reed-Muller code of the same degree and lower dimension.

\medskip

The following lemma is a standard lemma in the PCP literature, saying that random lines sample well
the space $\Field^\RMDim$.

\begin{lemma}\label{lem:sampling}
Let $\F$ be a finite field.
For any subset $A \seq \F^2$ of density $\mu = |A|/|\F^2|$,
and for any $\epsilon >0$ it holds that
\begin{equation*}
    \Pr_{\vecx \in \F^2,\vecy \in \F^2}
    \bigg[\bigg|\frac{|\Line{\vecx}{\vecy}\cap A|}{|\Line{\vecx}{\vecy}|} - \mu \bigg| > \epsilon \bigg]
    \leq \frac{1}{|\F|} \cdot \frac{\mu}{\epsilon^2}
    \enspace.
\end{equation*}
\end{lemma}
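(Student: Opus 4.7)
The plan is to prove this by a standard second-moment / pairwise-independence argument (Chebyshev's inequality applied to indicator variables along the line). The key observation is that, although the points of a random line $\Line{\vecx}{\vecy}$ are not mutually independent, any two of them are pairwise independent and uniform over $\F^2$, which is exactly what Chebyshev needs.

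Concretely, for each $t \in \F$ define the indicator random variable $X_t = \mathbf{1}[\vecx + t \vecy \in A]$, so that $|\Line{\vecx}{\vecy} \cap A| = \sum_{t \in \F} X_t$ (treating the line as a multiset indexed by $t$, which is faithful when $\vecy \neq 0$; the $\vecy=0$ case contributes only a $1/|\F|$-fraction of the probability space and can be absorbed into the final bound). First I would check that $\E[X_t] = \mu$ for every fixed $t$: since $\vecx$ alone is uniform over $\F^2$, the random variable $\vecx + t \vecy$ is uniform over $\F^2$ regardless of $\vecy$, so $\Pr[\vecx + t \vecy \in A] = \mu$.

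Next, the crucial step is pairwise independence. For $t \neq t'$ in $\F$, the linear map
\begin{equation*}
    (\vecx, \vecy) \;\longmapsto\; (\vecx + t \vecy,\; \vecx + t' \vecy)
\end{equation*}
is a bijection of $\F^2 \times \F^2$, because its determinant (as a block linear map) is $t' - t \neq 0$. Hence when $(\vecx, \vecy)$ is uniform, the pair $(\vecx + t \vecy, \vecx + t' \vecy)$ is uniform over $\F^2 \times \F^2$, so $X_t$ and $X_{t'}$ are independent. Consequently $\Cov(X_t, X_{t'}) = 0$ for $t \neq t'$, and $\Var(X_t) = \mu(1-\mu) \leq \mu$, which yields
\begin{equation*}
    \Var\!\left(\frac{1}{|\F|} \sum_{t \in \F} X_t\right) \;=\; \frac{1}{|\F|^2} \sum_{t \in \F} \Var(X_t) \;\leq\; \frac{\mu}{|\F|}.
\end{equation*}

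Finally, applying Chebyshev's inequality to the empirical average $\frac{1}{|\F|}\sum_t X_t = \frac{|\Line{\vecx}{\vecy} \cap A|}{|\Line{\vecx}{\vecy}|}$ (whose expectation is $\mu$) immediately gives the claimed bound $\frac{1}{|\F|} \cdot \frac{\mu}{\epsilon^2}$. I do not expect any real obstacles: the one minor subtlety is handling the degenerate case $\vecy = 0$ where the ``line'' collapses to a point, but this event has probability $1/|\F|^2$ and so is dominated by the Chebyshev term; alternatively, one restricts the sampling to $\vecy \neq 0$ and absorbs the negligible conditioning into the constants.
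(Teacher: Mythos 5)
Your approach — define indicator variables $X_t = \mathbf{1}[\vecx + t\vecy \in A]$, observe pairwise independence, bound the variance, and apply Chebyshev — is exactly the proof given in the paper. Your additional remarks (the explicit bijection argument for pairwise independence, and the discussion of the degenerate $\vecy = 0$ case) are sound but go slightly beyond what the paper spells out; in particular, with the multiset-of-$t$-values interpretation of the line, the $\vecy=0$ case requires no special handling, since $\frac{1}{|\F|}\sum_t X_t = X_0$ there anyway.
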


\begin{proof}
    For each $t \in \F$, let $X_t$ be an indicator random variable for the event $\vecx+t \cdot \vecy \in A$.
    Since each point is a uniform point in the plane, we have $\E[X_t] = \Pr[X_t = 1]= \mu$,
    Therefore, denoting $X = \sum_{t \in \F}{X_t}$, it follows that $\E[\Line{\vecx}{\vecy}\cap A] = \E[X] = \mu \cdot \abs{\Field}$.

    We are interested in bounding the deviation of $X = \sum_{t}{X_t}$ from its expectation.
    We do it by bounding the variance of $X$.
    Note first that $\Var[X_t] = \mu - \mu^2 \leq \mu$.
    By the pairwise independence of the points on a line, it follows that
    $\Var[X] = \sum_{t \in \Field}\Var[X_t] \leq \mu \cdot \abs{\F}$.
    Therefore, by applying Chebyshev's inequality we get
    \begin{equation*}
        \Pr\left[ \bigg| \frac{|\Line{\vecx}{\vecy}\cap A|}{|\Line{\vecx}{\vecy}|} - \mu \bigg| > \epsilon \right]
        =
        \Pr\left[\abs{X-\mu \abs{\Field}} > \epsilon \abs{\F}\right]
        \leq \frac{\Var[X]}{(\epsilon\abs{\F})^2} \leq \frac{\mu}{\abs{\F}\cdot \epsilon^2}
        \enspace,
    \end{equation*}
    as required.
\end{proof}

The following claim will be an important step in our analysis.

\begin{claim}\label{claim:mixing-time-key-claim}
    Let $\RMDim \in \N$ be a parameter, let $\H$ be a finite field, and let $\F$ be its extension of degree $\RMDim$.
    Let $\vech_1,\dots,\vech_\RMDim \in \H^\RMDim$ and $t_1,\dots,t_\RMDim \in \F$ be
    chosen independently uniformly at random from their domains.

    Then for any set $A \seq \F^\RMDim$ of size $\abs{A} = \alpha \cdot \abs{\F^\RMDim}$ it holds that
    \begin{equation*}
        \Pr\left[\sum_{i=1}^\RMDim t_i \cdot \vech_i \in A \right] \leq \alpha + 2/\H
        \enspace.
    \end{equation*}
\end{claim}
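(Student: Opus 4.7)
The plan is to reduce the claim to a statement about uniformity: conditioned on a high-probability event on $\vech_1,\dots,\vech_\RMDim$, the sum $\sum_i t_i \vech_i$ is \emph{exactly} uniform in $\F^\RMDim$, and then the $2/|\H|$ slack comes from bounding the failure probability of that event. The event in question is that $\vech_1,\dots,\vech_\RMDim$ are linearly independent over $\H$, i.e., form a basis of $\H^\RMDim$ as an $\H$-vector space.

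To carry out the uniformity argument, I would first fix a basis $\beta_1,\dots,\beta_\RMDim$ of $\F$ as an $\H$-vector space (which exists because $[\F:\H]=\RMDim$), and write each $t_i = \sum_{j=1}^{\RMDim} c_{i,j}\beta_j$ with $c_{i,j}\in\H$. Because $t_i$ is uniform over $\F$, the coefficients $\{c_{i,j}\}_{j=1}^\RMDim$ are uniform and independent over $\H$, and across different $i$'s they are mutually independent. Reorganizing the double sum, I would write
\begin{equation*}
    \sum_{i=1}^\RMDim t_i\cdot\vech_i = \sum_{j=1}^\RMDim \beta_j\cdot\vec{u}_j, \qquad \text{where}\quad \vec{u}_j \DefineEqual \sum_{i=1}^\RMDim c_{i,j}\cdot\vech_i \in \H^\RMDim.
\end{equation*}
Conditioned on $\vech_1,\dots,\vech_\RMDim$ being linearly independent over $\H$, the map $(c_{1,j},\dots,c_{\RMDim,j})\mapsto \vec{u}_j$ is an $\H$-linear bijection from $\H^\RMDim$ to $\H^\RMDim$ for each $j$, so $\vec{u}_1,\dots,\vec{u}_\RMDim$ are independent and uniform in $\H^\RMDim$. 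Then, coordinate-by-coordinate, the $k$-th coordinate of $\sum_j \beta_j\vec{u}_j$ is $\sum_j \beta_j u_{j,k}\in\F$; since $\beta_1,\dots,\beta_\RMDim$ is a basis of $\F$ over $\H$ and $u_{j,k}$ is uniform in $\H$, this coordinate is uniform in $\F$, and the coordinates are independent. Thus conditionally the sum is exactly uniform in $\F^\RMDim$, contributing at most $\alpha$ to the probability of landing in $A$.

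It remains to bound the probability that $\vech_1,\dots,\vech_\RMDim$ (each uniform and independent in $\H^\RMDim$) fail to be linearly independent over $\H$. A standard computation gives this probability as $1-\prod_{i=1}^{\RMDim}(1-|\H|^{-i})$, which is at most $\sum_{i=1}^{\RMDim}|\H|^{-i} \le \frac{1}{|\H|-1}\le \frac{2}{|\H|}$ (for $|\H|\ge 2$). Combining with the conditional uniformity via a union bound yields $\Pr[\sum_i t_i\vech_i \in A]\le \alpha + 2/|\H|$, as required.

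I do not expect a real obstacle here; the only subtlety is keeping the two ``dimensions'' straight (the ambient space $\F^\RMDim$ of dimension $\RMDim$ over $\F$, and the isomorphism $\F\cong\H^\RMDim$ as $\H$-vector spaces), which is exactly why the hypothesis $[\F:\H]=\RMDim$ is used to match cardinalities $|\H|^{\RMDim^2}=|\F|^\RMDim$ and make the bijection work.
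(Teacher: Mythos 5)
Your proof is correct and follows essentially the same route as the paper: the paper identifies $\sum_i t_i\cdot\vech_i$ with a product $H\cdot T$ of two independent uniform $\RMDim\times\RMDim$ matrices over $\H$ and conditions on $H$ being invertible, which is exactly your event that $\vech_1,\dots,\vech_\RMDim$ are linearly independent, with the same $\alpha$ plus $2/|\H|$ accounting. Your basis-expansion phrasing of the conditional uniformity is just a coordinate-level spelling-out of the paper's observation that $T\mapsto HT$ is a bijection when $H$ is invertible.
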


\begin{proof}
In order to prove the claim let us introduce some notation.
We write each element in $\F$ as an $\RMDim$-dimensional row vector over $\H$.
Also, we will represent an element $\vecx \in \F^\RMDim$ as a $\RMDim \times \RMDim$ matrix over $\H$,
where the $i$'th row represents $\vecx_i \in \F$, the $i$'th coordinate of $\vecx$.
Using this notation we need to prove that the random matrix corresponding to the sum $\sum_{i=1}^\RMDim t_i \cdot \vech_i$
is close to a random matrix with entries chosen uniformly from $\H$ independently from each other.

Using the notation above, write each $t_i \in \F$ as a row vector $(t_{i,1},\dots,t_{i,\RMDim}) \in \H^\RMDim$.
Observe that for any vector $\vech_i = (\vech_{i,1},\dots,\vech_{i,\RMDim})^T \in \H^m$ we can represent $t_i \cdot \vech_i \in \F^\RMDim$ as the outer product
\begin{equation*}
    t_i \cdot \vech_i =
    \begin{bmatrix}
        t_{i,1} \cdot \vech_{i,1} & t_{i,2} \cdot \vech_{i,1}& \dots & t_{i,\RMDim} \cdot \vech_{i,1}\\
        t_{i,1} \cdot \vech_{i,2} & t_{i,2} \cdot \vech_{i,2}& \dots & t_{i,\RMDim} \cdot \vech_{i,2}\\
        \vdots & \vdots & \ddots &  \vdots\\
        t_{i,1} \cdot \vech_{i,\RMDim} & t_{i,2} \cdot \vech_{i,\RMDim}& \dots & t_{i,\RMDim} \cdot \vech_{i,\RMDim}\\
    \end{bmatrix}
    =
    \begin{bmatrix}
        \vech_{i,1} \\
        \vech_{i,2} \\
        \vdots \\
        \vech_{i,\RMDim}
    \end{bmatrix}
    \cdot
    \begin{bmatrix}
        t_{i,1} & t_{i,2} &\dots & t_{i,\RMDim}
    \end{bmatrix}
\end{equation*}
Therefore, the sum $\sum_{i=1}^\RMDim t_i \cdot \vech_i$ is represented as
\begin{equation*}
    \sum_{i=1}^\RMDim
    \begin{bmatrix}
        \vech_{i,1} \\
        \vech_{i,2} \\
        \vdots \\
        \vech_{i,\RMDim}
    \end{bmatrix}
    \cdot
    \begin{bmatrix}
        t_{i,1} & t_{i,2} &\dots & t_{i,\RMDim}
    \end{bmatrix}
    = H \cdot T
    \enspace,
\end{equation*}
where $H$ is the $\RMDim \times \RMDim$ matrix with $H_{i,j} = \vech_{j,i}$, and $T$ is the $\RMDim \times \RMDim$ matrix with $T_{i,j} = t_{i,j}$.
That is, the sum $\sum_{i=1}^\RMDim t_i \cdot \vech_i$ is represented as a product of two uniformly random matrices over $\SubF$.

Next we show that if $H,T \in \SubF^{\RMDim \times \RMDim}$ are chosen uniformly at random and independently,
then for any collection $A$ of matrices of size $\abs{A} = \alpha \cdot \abs{\H^{\RMDim^2}}$
it holds that $\Pr[H \cdot T \in A] \leq \alpha + 2/\H$.
Indeed,
\begin{equation*}
    \Pr[H \cdot T \in A] \leq \Pr[H \cdot T \in A | \mbox{$H$ is invertible}] +
    \Pr[\mbox{$H$ is not invertible}]
    \enspace.
\end{equation*}
If $H$ is invertible, then for a uniformly random $T \in \SubF^{\RMDim \times \RMDim}$ the probability that $H \cdot T \in A$ is exactly $\alpha$,
and it is easy to check that $\Pr[\mbox{$H$ is not invertible}] = \sum_{i=1}^{\RMDim}\frac{1}{\abs{\H}^{i}} \leq \frac{2}{\abs{\H}}$.
\end{proof}

\subsection{Relaxed locally correctable codes}
\label{sec:rlcc}
Following the discussion in the introduction, we provide a formal definition of relaxed LCCs,
and state some related basic facts and known results.

\begin{definition}[Relaxed LCC]
\label{def:rlcc}
Let $\Code \colon \Sigma^\CodeDim \to \Sigma^\BlockLength$ be an error correcting code with relative distance $\DistCode$,
and let $\Query \in \N$, $\DecRad \in (0,\delta/2)$,and $\Soundness \in (0,1]$ be parameters.
Let $\Decoder$ be a randomized algorithm that gets an oracle access to an input $w \in \Sigma^n$ and an explicit access to an index $i \in [n]$.
We say that $\Decoder$ is a \defemph{$\Query$-query relaxed local correction algorithm} for $\Code$ with correction radius $\DecRad$ and soundness $\Soundness$
if for all inputs the algorithm $\Decoder$ reads explicitly the coordinate $i \in [\BlockLength]$, reads at most $\Query$ (random) coordinates in $w$, and satisfies the following conditions.
\begin{enumerate}
  \item \label{def:rlcc-decodes-correct}
  For every $w \in \Code$, and every coordinate $i \in [\BlockLength]$ it holds that $\Pr[\Decoder^w(i) = w_i]=1$.

  \item \label{def:rlcc-second-cond}
  For every $w \in \Sigma^n$ that is $\DecRad$-close to some codeword $c^* \in \Code$ and every coordinate $i \in [\BlockLength]$
  it holds that $\Pr[ \Decoder^w(i) \in \{ c^*_i,\bot \} ] \geq \Soundness$,
  where $\bot \not \in \Sigma$ is a special \emph{abort} symbol.
\end{enumerate}
The code $\Code$ is said to be a \defemph{$(\DecRad,\Soundness)$-relaxed locally correctable code (RLCC)} with query complexity $\Query$ if it admits a $\Query$-query relaxed local correction algorithm with correction radius $\DecRad$ and soundness $\Soundness$.
\end{definition}

%

\begin{observation}
    Note that for \emph{systematic} codes it is clear from \cref{def:rlcc} that RLCC is a stronger notion than RLDC,
    as it allows the local correction algorithm not only to decode each symbol of the message, but also each symbol of the codeword itself.
    That is, any systematic RLCC is also an RLDC with the same parameters.
\end{observation}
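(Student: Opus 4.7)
The plan is to unpack the meaning of \emph{systematic} and observe that an RLCC correcting algorithm immediately yields an RLDC decoding algorithm with the same parameters when restricted to the message coordinates. Recall that a systematic code $\Code \colon \Sigma^\CodeDim \to \Sigma^\BlockLength$ is one where every codeword $\Code(\message)$ satisfies $\Code(\message)_i = \message_i$ for all $i \in [\CodeDim]$, so there is a canonical inclusion $[\CodeDim] \hookrightarrow [\BlockLength]$ sending each message coordinate to the codeword coordinate that literally holds its value.

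Given any $\Query$-query RLCC correcting algorithm $\Decoder$ for $\Code$ with correction radius $\DecRad$ and soundness $\Soundness$, I would define the candidate RLDC decoder $\Decoder'$ on input $i \in [\CodeDim]$ by simply invoking $\Decoder^w(i)$, viewing $i$ as the corresponding codeword coordinate under the above inclusion. The query complexity, randomness, and failure probability of $\Decoder'$ are inherited verbatim from $\Decoder$. To verify the two RLDC requirements I would then apply the two conditions of \cref{def:rlcc} in turn: for completeness, if $w = \Code(\message)$, then $\Decoder^w(i)$ outputs $w_i = \Code(\message)_i = \message_i$ with probability $1$; and for relaxed decoding, if $w$ is $\DecRad$-close to $c^* = \Code(\message^*)$, then $\Decoder^w(i)$ outputs $c^*_i = \message^*_i$ or the abort symbol with probability at least $\Soundness$. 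In both steps the key equality uses that $c^*$ (equivalently $\Code(\message^*)$) agrees with $\message^*$ on its first $\CodeDim$ coordinates by the systematic property.

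There is no substantive technical obstacle here: the statement is essentially definitional, and the only point requiring even a moment's care is to be explicit about the identification of message indices with their corresponding codeword indices, which is precisely where systematicity is used. All three parameters --- query complexity, correction radius, and soundness --- transfer without any loss, so any systematic $(\DecRad, \Soundness)$-RLCC with query complexity $\Query$ is automatically a $(\DecRad, \Soundness)$-RLDC with query complexity $\Query$, as claimed.
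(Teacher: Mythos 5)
Your proposal is correct and matches the paper's (implicit) reasoning exactly: the paper states this as an observation without a formal proof, precisely because the argument is the definitional one you give — systematicity identifies message coordinates with the first $\CodeDim$ codeword coordinates, and restricting the correction algorithm to those coordinates yields a relaxed decoder with identical query complexity, radius, and soundness. Nothing is missing.
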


Finally, we recall the following theorem of Chiesa, Gur, and Shinkar~\cite{CGS20}.

\begin{theorem}[\cite{CGS20}]
\label{thm:CGS_RLCC}
For any finite field $\Field$, and parameters $\CodeDim, q  \in \N$,
there exists an explicit construction of a systematic linear code
$\Code_{\rm CGS} \colon \Field^\CodeDim \to \Field^\BlockLength$
with block length $\BlockLength = q^{O(\sqrt{q})} \cdot \CodeDim^{1 + O(1/\sqrt{q})}$
and constant relative distance,
that is a $q$-query RLCC with constant correction radius $\DecRad = \Omega(1)$, and constant soundness $\Soundness = \Omega(1)$.
\end{theorem}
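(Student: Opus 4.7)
The plan is to follow the two-building-block recipe sketched in \cref{sec:proof-overview} for the Reed-Muller code, but applied to a tensor power of a small Reed--Solomon-type base code. Fix $\TensorDim = \Theta(\sqrt{\Query})$ and let $\Code_0 \colon \Field^{\CodeDim_0} \to \Field^{\RSlen}$ be a systematic linear code of constant rate and constant relative distance over $\Field$ (extended to an extension of $\Field$ if needed), with $\CodeDim_0 = \lceil \CodeDim^{1/\TensorDim} \rceil$, so that $\Code_0^{\otimes \TensorDim}$ encodes messages of length at least $\CodeDim$. The payload part of $\Code_{\rm CGS}$ is the tensor code $\Code_0^{\otimes \TensorDim}$, which is systematic, linear, has constant relative distance, and block length $\RSlen^{\TensorDim} = \Query^{O(\sqrt{\Query})} \cdot \CodeDim^{1+O(1/\sqrt{\Query})}$. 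The proof part consists, for every axis-parallel line $\ell$ in the tensor domain, of a correctable canonical PCPP (\ccPCPP) asserting that the restriction of the payload to $\ell$ is a codeword of $\Code_0$, augmented (in the spirit of \cref{notation:f_P^x}) so the same proof also certifies consistency with the payload value at a designated anchor point of $\ell$. Each \ccPCPP{} is obtained from a canonical PCPP of \cite{DinurGG18, Paradise20} by re-encoding its canonical proof with a systematic RLCC as in \cite{CGS20}; it has $O(1)$ query complexity and length $\poly(\RSlen)$, so the total block length remains $\Query^{O(\sqrt{\Query})} \cdot \CodeDim^{1+O(1/\sqrt{\Query})}$.

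The local correcting algorithm on input $(w,i)$ first routes by which block $i$ belongs to. If $i$ lies inside one of the \ccPCPP{}s, the algorithm exploits the RLCC used to encode that canonical proof, combined with a short consistency check against the payload, exactly as in the proof-inside-RLCC trick of \cite{CGS20}. If $i$ is a payload coordinate, located at $\vecx \in [\RSlen]^{\TensorDim}$, the algorithm runs a $\CTRW$: pick a uniformly random axis-parallel line $\ell_1$ through $\vecx$, a uniformly random point $\vecx_1 \in \ell_1$, a uniformly random axis-parallel line $\ell_2$ through $\vecx_1$, and so on for $\TensorDim$ steps. For each $\ell_j$ it invokes the corresponding \ccPCPP{} verifier to check that $w_{\mid \ell_j}$ is close to a codeword of $\Code_0$ that agrees with $w$ at the shared anchor point (either $\vecx$ for $j=1$, or $\vecx_{j-1}$ otherwise). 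If every verifier accepts, the algorithm returns $w(\vecx)$; otherwise it returns $\bot$. Since the walk has $\TensorDim = O(\sqrt{\Query})$ steps and each \ccPCPP{} invocation uses $O(1)$ queries, the total query complexity is $O(\Query)$.

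Completeness is immediate from canonicity of the PCPPs. For soundness, assume $w$ is $\DecRad$-close to a codeword $c^*$ of $\Code_{\rm CGS}$ but $w_i \neq c^*_i$. The decisive step is a tensor-code $\CTRW$ soundness lemma showing that with constant probability over the walk, at least one sampled line $\ell_j$ carries a strong discrepancy: either $w_{\mid \ell_j}$ is $\Omega(1)$-far from every codeword of $\Code_0$, or it is close to a codeword $Q$ of $\Code_0$ with $Q$ disagreeing with $w$ at the anchor, rendering the augmented pair $\Omega(1)$-far from the \ccPCPP{} language. Canonical soundness of the \ccPCPP{} then rejects with constant probability, and the decoder outputs $\bot$. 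The main obstacle is this $\CTRW$ soundness lemma. Its proof naturally splits into a mixing component and a propagation component: mixing (that after $\TensorDim$ axis-parallel steps the terminal point is statistically close to a uniformly random point of $[\RSlen]^{\TensorDim}$) follows from a standard coordinate-by-coordinate hybrid argument using a sampling bound in the spirit of \cref{lem:sampling}; propagation (that the initial disagreement $w(\vecx) \neq c^*(\vecx)$ cannot be absorbed into the $\DecRad$-fraction of corruptions along the walk) is the delicate part, since if every line were close to a codeword agreeing with its anchor, one could stitch the $\TensorDim$ local codewords into a globally consistent imputation at $\vecx$ forced to equal $c^*(\vecx)$, contradicting $w_i \neq c^*_i$. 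Formalizing this stitching step and tuning $\DecRad$ so that noise dissipation is dominated by the distance of $\Code_0$ at every level is the bulk of the work; once it is in place, the composition with \ccPCPP{}s is routine and yields the claimed parameters.
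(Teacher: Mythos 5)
First, a point of reference: the paper does not prove \cref{thm:CGS_RLCC} at all. It is imported verbatim from \cite{CGS20} and used as a black box (to manufacture the correctable canonical PCPPs of \cref{thm:cPCPP-for-RM}). So there is no in-paper proof to compare against; what follows measures your sketch against the CGS20 construction as the paper itself describes it in \cref{sec:overview-CTRW}.

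Your architecture is the right one --- tensor power of a constant-rate base code, axis-parallel-line $\CTRW$, one ccPCPP per line, composition --- but there is a genuine gap, and it sits exactly at the crux. You run the walk for $\TensorDim = \Theta(\sqrt{\Query})$ steps with $O(1)$ queries per step; that is $O(\sqrt{\Query})$ queries, not $O(\Query)$. If an $\TensorDim$-step axis-parallel walk really had the required robust soundness, then after relabeling you would have a $\Query$-query RLCC of block length $\CodeDim^{1+O(1/\Query)}$ --- i.e.\ the main theorem of the \emph{present} paper, not \cref{thm:CGS_RLCC}. In fact the walk in \cite{CGS20} has $\Theta(\Query)=\Theta(\TensorDim^2)$ steps, and this quadratic gap between walk length and tensor dimension is precisely the loss that the $\H$-plane-line walk of \cref{sec:CTRW-RM} is designed to remove. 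Your claim that mixing in $\TensorDim$ steps ``follows from a standard coordinate-by-coordinate hybrid argument'' does not go through: with a uniformly random axis at each step, after $\TensorDim$ steps a constant fraction of the coordinates is never refreshed (coupon collector), so the endpoint is confined to a low-dimensional axis-parallel subflat through $\vecx$ and is far from uniform; if instead you cycle through the axes deterministically, the endpoint is uniform but the per-step soundness losses (the analogue of the $\epsilon_i$ and sampling-error bookkeeping in \cref{lemma:all-F-i-s-hold-whp}) are what force the longer walk in \cite{CGS20}. Finally, the propagation/``stitching'' lemma that you correctly identify as the bulk of the work is left entirely unproved; for the record, the actual argument (both in \cite{CGS20} and in \cref{thm:H-ctrw}) does not stitch local codewords into a global imputation but reduces to $c^*=0$ by linearity and tracks the number of non-zero entries surviving on each successive line. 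As it stands the proposal is an accurate outline of the CGS20 construction, but its one quantitative commitment (an $\TensorDim$-step walk suffices) is inconsistent with the parameters of the theorem being proved.
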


\subsection{Canonical PCPs of proximity}
\label{sec:canonical-pcpps}

Next we define the notions of \emph{probabilistically checkable proofs of proximity},
and the variants that we will need in this paper.

\begin{definition}[PCP of proximity]
\label{def:PCPP}
A $\Query$-query \defemph{PCP of proximity (PCPP) verifier} for a language $\Language \subseteq \Sigma^*$
with soundness $\SoundnessPCPP$ with respect the to proximity parameter $\rho$, is a polynomial-time randomized algorithm $V$
that receives oracle access to an input $x \in \Sigma^n$ and a proof $\pi$.
The verifier makes at most $\Query$ queries to $x \circ \pi$ and has the following properties:
\begin{description}
  \item[Completeness:] For every $x \in \Language$ there exists a proof $\pi$ such that $\Pr[V^{x,\pi}=ACCEPT]=1$.
  \item[Soundness:] If $x$ is $\rho$-far from $\Language$, then for every proof $\pi$ it holds that $\Pr[V^{x,\pi}=ACCEPT] \leq \SoundnessPCPP$.
\end{description}
\end{definition}

A \emph{canonical PCPP (cPCPP)} is a PCPP in which every instance in the language has a \emph{canonical} accepting proof.
Formally, a canonical PCPP is defined as follows.

\begin{definition}[Canonical PCPP]
\label{def:canonical-PCPP}
A $\Query$-query \defemph{canonical PCPP verifier} for a language $\Language \subseteq \Sigma^{*}$ with soundness $\SoundnessPCPP$ with respect to proximity parameter $\rho$,
is a polynomial-time randomized algorithm $V$
that gets oracle access to an input $x \in \Sigma^n$ and a proof $\pi$.
The verifier makes at most $\Query$ queries to $x \circ \pi$, and satisfies the following conditions:
\begin{description}
  \item[Canonical completeness:]
  For every $w \in L$ there exists a \emph{unique} (canonical) proof $\pi(w)$
  for which $\Pr[V^{w,\pi(w)} = ACCEPT]=1$.

  \item[Canonical soundness:]
  For every $x \in\Sigma^n$ and proof $\pi$ such that
\begin{equation}
\label{strong-pcpp:eq}
\delta(x,\pi) \triangleq
\min_{w \in \Language}
\left\{
  \max\left(
    \frac{\dist(x,w)}{n}
    \ ,\
    \frac{\dist(\pi,\pi(w))} {\PCPPLen(n)}
  \right)
\right\} > \rho
\enspace,
\end{equation}
it holds that $\Pr[V^{x,\pi} = ACCEPT] \leq \SoundnessPCPP$.
\end{description}
\end{definition}

The following result on canonical PCPPs was proved in~\cite{DinurGG18} and \cite{Paradise20}.

\begin{theorem}[\cite{DinurGG18,Paradise20}]
\label{thm:PCPP}
Let $\Robust>0$ be a proximity parameter.
For every language in $\Language \in \PClass$ there exists a polynomial $\PCPPLen \colon \N \to \N$ and a canonical PCPP verifier for $\Language$ satisfying the following properties.
    \begin{enumerate}
      \item For all $x \in \Language$ of length $\abs{x} = n$ the length of the canonical proof $\pi(x)$ is $\abs{\pi(x)} = \PCPPLen(n)$.
      \item The query complexity of the PCPP verifier is $\Query = O(1/\Robust)$.
      \item The PCPP verifier for $\Language$ has perfect completeness and soundness $\Soundness = 1/2$ for proximity parameter $\Robust$ (with respect to the uniform distance measure).
    \end{enumerate}
\end{theorem}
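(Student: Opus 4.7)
The plan is to start from a standard PCPP construction for languages in $\PClass$, which provides (by \cite{BGHSV06, DinurR04}) a $O(1/\Robust)$-query verifier with polynomial proof length and soundness $1/2$ for proximity $\Robust$. Since $\Language \in \PClass$, membership is decided by a polynomial-size circuit $C_\Language$, and by standard arithmetization one can associate with each $w \in \Language$ a computation tableau for $C_\Language(w)$ — a list of gate values — that a PCPP verifier checks via a low-degree / tableau-consistency test. The challenge is to upgrade this system so that each $w \in \Language$ has a \emph{unique} accepting proof $\pi(w)$ and so that soundness holds jointly over the pair $(x,\pi)$ in the sense of \cref{strong-pcpp:eq}.

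To enforce canonicity, I would define $\pi(w)$ to be a specific, deterministic, redundant encoding of the tableau of $C_\Language$ on $w$ — concretely, the low-degree extension of the gate-value function over a large field (possibly tensored or concatenated) so that every bit of the proof is prescribed by $w$ alone. Because the encoding is a deterministic function of $w$, canonical completeness is immediate: on input $(w, \pi(w))$ the verifier runs the low-degree test and the tableau-consistency checks and always accepts.

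The main obstacle is \textbf{canonical soundness}, namely rejecting any $(x,\pi)$ whose max-distance from $\{(w,\pi(w)) : w \in \Language\}$ exceeds $\Robust$. The natural route is a case split. If $x$ is $\Robust$-far from $\Language$, the standard PCPP soundness already rejects. If $x$ is close to some $w \in \Language$ but $\pi$ is $\Robust$-far from $\pi(w)$, then either $\pi$ is far from every low-degree codeword (in which case a robust low-degree test catches it), or $\pi$ is close to the canonical encoding of the tableau of some \emph{other} input $w' \neq w$, in which case an input-consistency subtest tying the ``input layer'' of the tableau back to $x$ rejects. The delicate step, and the technical heart of \cite{DinurGG18, Paradise20}, is to coordinate the low-degree test and the input-consistency subtest so that the joint rejection probability is constant while the overall query complexity stays at $O(1/\Robust)$; this is achieved by composing the two subtests with shared randomness and queries, so that every ``bad'' pair $(x,\pi)$ is guaranteed to violate at least one check with constant probability.
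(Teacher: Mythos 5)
The paper does not prove this statement at all: it is imported verbatim as a black-box result of \cite{DinurGG18} and \cite{Paradise20}, so there is no in-paper argument to compare yours against. Your sketch does capture the general shape of those constructions (deterministic redundant encoding of the computation tableau to force a canonical proof, plus a case split between ``input far from $\Language$'' and ``proof far from the canonical one''), but as written it is an outline rather than a proof, and it explicitly defers the decisive step --- coordinating the low-degree test and the input-consistency subtest so that canonical soundness holds with constant probability at query complexity $O(1/\Robust)$ --- to the very papers being cited. That step is the theorem; restating that it is ``the technical heart'' of the cited works does not discharge it.

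Two concrete gaps beyond that. First, canonical completeness in \cref{def:canonical-PCPP} requires that $\pi(w)$ be the \emph{unique} proof accepted with probability $1$. Defining $\pi(w)$ as a deterministic function of $w$ gives existence of an accepting proof, but uniqueness requires that every other proof string --- including one differing from $\pi(w)$ in a single symbol --- be rejected with positive probability; this forces every proof coordinate to be cross-checked by the verifier with nonzero probability, which your sketch never arranges. Second, your soundness case split omits the case where $\pi$ is close to a low-degree codeword that is not the canonical encoding of \emph{any} $w' \in \Language$ (e.g., a low-degree function violating the tableau constraints); you invoke tableau-consistency checks for completeness but not in the soundness analysis, where they are actually needed. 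Neither gap is fatal to the approach --- both are handled in \cite{DinurGG18, Paradise20} --- but they mean the proposal cannot stand as a self-contained proof of the stated theorem.
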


Next, we define the stronger notion of \emph{correctable canonical PCPPs (ccPCPP)}, originally defined in \cite{CGS20}.
A ccPCPP system is a canonical PCPP system that in addition to allowing the verifier to be able to locally verify the validity of the given proof,
it also admits a local correction algorithm that locally corrects potentially corrupted symbols of the canonical proof.
Formally, the ccPCPP is defined as follows.
\begin{definition}[Correctable canonical PCPP]
\label{def:cPCP}
    A language $\Language \subseteq \Sigma^{*}$ is said to admit a ccPCPP with query complexity $\Query$ and soundness $\SoundnessPCPP$ with respect to the proximity parameter $\Robust$,
    and correcting soundness $\Soundness$ for correcting radius $\DecRad$ if it satisfies the following conditions.

\begin{enumerate}
    \item $\Language$ admits a $\Query$-query canonical PCPP verifier for $\Language$ satisfying the conditions in \cref{def:canonical-PCPP}
    with soundness $\SoundnessPCPP$ with respect to the proximity parameter $\Robust$.

    \item The code $\Pi_{\Language} = \{w \circ \pi(w) : w \in \Language\}$ is a $(\DecRad,\Soundness)$-RLCC with query complexity $\Query$,
    where $\pi(w)$ is the canonical proof for $w \in \Language$ from \cref{def:canonical-PCPP}.
\end{enumerate}
\end{definition}

\doclearpage
\section{Consistency test using random walk on the Reed-Muller code}
\label{sec:CTRW-RM}
Below we define the notion of \emph{consistency test using random walk ($\CTRW$)}.
This notion has been originally defined in \cite{CGS20} for tensor powers of general codes.
In this paper we focus on $\CTRW$ for the Reed-Muller code.

Informally speaking, a consistency test using random walk for Reed-Muller code $\RM = \RM_{\Field}(\RMDim, \deg)$
is a randomized algorithm that gets a word $\Word \colon \F^\RMDim \to \F$, which is close to some codeword $\codeword^* \in \RM$,
and an index $\Index \in \Field^\RMDim$ as an input, and its goal is to check whether $\Word(\Index) = \codeword^*(\Index)$.
In other words, it checks whether the value of $\Word$ at $\Index$ is consistent with the close codeword $\codeword^*$.
Below we formally describe the random process.

\begin{definition}[Consistency test using $\H$-plane-line random walk on $\RM_{\Field}(\RMDim, \deg)$]
\label{def:consistency-test}
    Let $\H$ be a field, and let $\F$ be a field extension of $\H$.
    Let $\RM = \RM_{\F}(\RMDim, \RMDeg)$ be the Reed-Muller code.
    An $\RWSteps$-steps \defemph{consistency test using $\SubF$-plane-line random walk} on $\RM$
    is a randomized algorithm that gets as input the evaluation table
    of some $\Word \colon \F^\RMDim \to \F$ and a coordinate $\Index \in \Field^\RMDim$,
    and works as in \cref{alg:H-CTRW}.

\begin{algorithm}
\caption{$\H$-plane-line $\CTRW$ for the $\RMDim-$dimensional Reed-Muller code}\label{alg:H-CTRW}

    \KwIn{$\Word \colon \F^{\TensorDim} \to \F$, $\Index \in \F^\TensorDim$}

    Pick $\vech_0, \vech'_0 \in \H^\TensorDim$ uniformly at random, and let $\vecx_0 = \vecx$

    Let $\P_0 = \Plane{\vecx_0}{\vech_0}{\vech'_0}$ be a random $\H$-plane passing through $\Index$

    \For{$i=1$ \KwTo $\RWSteps$}{
        Sample $s_{i-1},s'_{i-1} \in \F$ uniformly and independently

        Let $\vecx_i = \vecx_{i-1} + s_{i-1} \cdot \vech_{i-1} + s'_{i-1} \cdot \vech'_{i-1}$
        be a uniformly random point in $\P_{i-1}$

        Sample $t_{i-1}, t'_{i-1} \in \F$ uniformly and independently, and let $\vech_i = t_{i-1} \cdot \vech_{i-1} + t'_{i-1}\cdot \vech'_{i-1}$

        Let $\ell_i = \Line{\vecx_i}{\vech_i} = \{\vecx_i + t \cdot \vech_i : t \in \Field\}$ be a random line in $\P_{i-1}$

        Pick $\vech'_i \in \H^m$ uniformly at random

        Let $\P_i = \Plane{\vecx_i}{\vech_i}{\vech_i'}$

    }

    \uIf {$\Word_{\mid \P_i}$ is an evaluation of a polynomial of total degree at most $\RMDeg$ for all $0 \leq i \leq \RWSteps$}
    {
        \Return{ACCEPT}
    }

    \Else
    {
        \Return{REJECT}
    }
\end{algorithm}

    \medskip
    \noindent
    We say that $\CTRW$ has perfect completeness and $(\TestRadius, \Robust, \Soundness)$-robust soundness if it satisfies the following guarantees.
    \begin{description}[nolistsep]
        \item [Perfect completeness:] If $\Word \in \RM$, then $\Pr[\CTRW^{\Word}(\Index) = ACCEPT] = 1$ for all $\Index \in \F^\RMDim$.
        \item [$(\TestRadius, \Robust, \Soundness)$-robust soundness:]
        If $\Word$ is $\TestRadius$-close to some $\codeword^* \in \RM$, but $\Word(\Index) \neq \codeword^*(\Index)$,
        then
        \begin{equation*}
            \Pr[\textrm{$\dist_{\Index}(\Word_{\mid \P_0}, \RM_{\mid \P_0}) \geq \Robust
            \vee
            \exists i \in [\RWSteps]$ such that $\dist_{\ell_i}(\Word_{\mid \P_i}, \RM_{\mid \P_i}) \geq \Robust$}]
            \geq \Soundness
            \enspace.
        \end{equation*}
    \end{description}
    Here $\dist_{\Index}$ and $\dist_{\ell_i}$ are as in \cref{def:weighted-distance}.
    \end{definition}

    \begin{remark}
    Note that the soundness condition above is equivalent to checking that
    \begin{equation*}
        \Pr[\textrm{$\dist(\Word_{\mid \P_0}^{(\Index)}, \RM_{\mid \P_0}^{(\Index)}) \geq \Robust
            \vee
            \exists i \in [\RWSteps]$ such that $\dist(\Word_{\mid \P_i}^{(\ell_i)}, \RM_{\mid \P_i}^{(\ell_i)}) \geq \Robust$}]
        \geq \Soundness
        \enspace.
        \end{equation*}
    \end{remark}
\noindent
Next, we show that the Reed-Muller code admits an $\RMDim$-steps consistency test using $\SubF$-plane-line random walk
with constant robust soundness.

\begin{theorem}\label{thm:H-ctrw}
    For integer parameters $\RMDeg,\RMDim \geq 2$,
    let $\H$ be a prime field, and let $\Field$ be field extension of $\H$ of degree $[\Field:\H] = \RMDim$ such that  $\abs{\Field} \geq 2\TensorDim \RSdeg$.
    Denote the size of $\Field$ by $n = \abs{\Field}$.
    Let $\RM = \RM_{\F}(\RMDim, \RMDeg)$ be the Reed-Muller code over the field $\Field$,
    so that the distance of the code is $\RMDist \geq 1- 1/2\TensorDim \geq 3/4$.
    Then, for any $\TestRadius \leq \RMDist/2$ and $\Robust \leq \RMDist/8$
    the $\RMDim$-steps consistency test using $\SubF$-plane-line random walk on $\RM$
    has perfect completeness and $(\TestRadius, \Robust, \Soundness)-$robust soundness,
    with $\Soundness = \left(1 - \frac{4}{|\F|} \right)^{\RMDim} - \frac{\TestRadius + \frac{2}{\abs{\SubF}}}{\RMDist-2\Robust}$.
\end{theorem}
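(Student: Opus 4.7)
The plan is to dispose of perfect completeness in one line and then attack robust soundness by defining two families of \emph{good events} on the random walk whose joint occurrence makes robust acceptance impossible. Perfect completeness holds because the restriction of any total-degree-$\RMDeg$ polynomial to a plane is bivariate of degree at most $\RMDeg$, so every predicate of the test is satisfied when $\Word \in \RM$.

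For soundness, I would fix $\Word$ that is $\TestRadius$-close to some $\codeword^* \in \RM$ with $\Word(\Index) \neq \codeword^*(\Index)$ and suppose the test accepts robustly, i.e.\ $\dist_\Index(\Word_{\mid \P_0}, \RM_{\mid \P_0}) < \Robust$ and $\dist_{\ell_i}(\Word_{\mid \P_i}, \RM_{\mid \P_i}) < \Robust$ for every $i \in [\RMDim]$. Each predicate then supplies a bivariate polynomial $R_i$ of degree at most $\RMDeg$ on $\P_i$ with $\dist(\Word_{\mid \P_i}, R_i) < 2\Robust$, unique among such polynomials because $4\Robust < \RMDist$. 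The half-weight on $\Index$ in $\dist_\Index$ forces $R_0(\Index) = \Word(\Index) \neq \codeword^*(\Index)$, hence $R_0 \neq \codeword^*_{\mid \P_0}$. The rest of the argument propagates this disagreement from $\P_0$ to $\P_\RMDim$, where a Markov estimate combined with \cref{claim:mixing-time-key-claim} yields the contradiction.

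I would introduce sampling events $\{S_i\}_{i=1}^{\RMDim}$ asserting that the random line $\ell_i \subseteq \P_{i-1}$ simultaneously (a) samples the disagreement of $\Word$ and $R_{i-1}$ on $\P_{i-1}$ well enough that, together with the $\dist_{\ell_i}$-predicate on $\P_i$ and the univariate distance of $\RM$ restricted to lines, one concludes $R_{i-1}|_{\ell_i} = R_i|_{\ell_i}$; and (b) whenever $R_{i-1} \neq \codeword^*_{\mid \P_{i-1}}$, hits the disagreement of the two distinct bivariate codewords $R_{i-1}$ and $\codeword^*_{\mid \P_{i-1}}$ (which has density $\geq \RMDist$) in at least one point of $\ell_i$, forcing $R_{i-1}|_{\ell_i} \neq \codeword^*|_{\ell_i}$. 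Two applications of \cref{lem:sampling} with $\epsilon$ of order $\RMDist$ should yield $\Pr[S_i^c \mid \text{prior}] \leq 4/|\F|$ uniformly in the prior realization, whence $\Pr[\bigcap_i S_i] \geq (1 - 4/|\F|)^{\RMDim}$. I would then define the mixing event $M$ to be $\dist(\Word_{\mid \P_\RMDim}, \codeword^*_{\mid \P_\RMDim}) < \RMDist - 2\Robust$, and bound $\Pr[M^c]$ as follows: write a uniform point of $\P_\RMDim$ as $\Index$ plus an $\F$-combination of the $\RMDim + 2$ i.i.d.\ uniform $\H$-vectors $\vech_0, \vech'_0, \vech'_1, \ldots, \vech'_\RMDim$ drawn by the walk, track that each $s'_j$ enters linearly only in the coefficient of $\vech'_j$, and condition on everything except $(s'_0, \ldots, s'_{\RMDim - 1})$ to make the coefficients of $\vech'_0, \ldots, \vech'_{\RMDim - 1}$ jointly uniform in $\F^\RMDim$ and independent of those $\H$-vectors; \cref{claim:mixing-time-key-claim} then shows the marginal of the point is within total variation $2/|\H|$ of uniform in $\F^\RMDim$, so $\E[\dist(\Word_{\mid \P_\RMDim}, \codeword^*_{\mid \P_\RMDim})] \leq \TestRadius + 2/|\H|$ and Markov gives $\Pr[M^c] \leq (\TestRadius + 2/|\H|)/(\RMDist - 2\Robust)$.

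Finally, under $\bigcap_i S_i$ a forward induction on $i$ shows $R_i \neq \codeword^*_{\mid \P_i}$ for every $i$: the base case is $R_0 \neq \codeword^*_{\mid \P_0}$, and in the inductive step (a) and (b) combine to give $R_i|_{\ell_i} = R_{i-1}|_{\ell_i} \neq \codeword^*|_{\ell_i}$. In particular $R_\RMDim$ and $\codeword^*_{\mid \P_\RMDim}$ are distinct bivariate codewords on $\P_\RMDim$ and therefore differ on at least $\RMDist$ of $\P_\RMDim$; together with $\dist(\Word_{\mid \P_\RMDim}, R_\RMDim) < 2\Robust$ this forces $\dist(\Word_{\mid \P_\RMDim}, \codeword^*_{\mid \P_\RMDim}) > \RMDist - 2\Robust$, contradicting $M$. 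Hence robust acceptance is impossible under $\bigcap_i S_i \cap M$, so the robust-rejection probability is at least $\Pr[\bigcap_i S_i] - \Pr[M^c] \geq (1 - 4/|\F|)^{\RMDim} - (\TestRadius + 2/|\H|)/(\RMDist - 2\Robust)$, which is the claimed $\Soundness$. The main technical obstacle will be the marginal-independence bookkeeping in the mixing step: I must trace each random scalar $s_j, s'_j, t_j, t'_j, a, b$ through the full expansion of a point of $\P_\RMDim$ to verify that the hypothesis of \cref{claim:mixing-time-key-claim} applies after the intended conditioning.
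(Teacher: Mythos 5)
Your proposal is correct in substance but is a genuinely different organization of the argument, so it is worth comparing the two. The paper first reduces to $\codeword^* = 0$ by linearity, so it never decodes: it works directly with the events $E_i$ (line $\ell_i$ carries many non-zeros of $\Word$ but plane $\P_i$ carries few) and $F_i$ (line and plane both carry many), observes that the $E_i$'s along a prefix of $F_j$'s are \emph{disjoint} ``robustly rejecting'' events with total probability $\sum_i \epsilon_i$, then proves \cref{lemma:all-F-i-s-hold-whp} by peeling one $F_i$ at a time (one application of \cref{lem:sampling} per step, giving the $\left(1-4/|\F|\right)^\RMDim$), and closes by combining with \cref{lemma:random-z-in-Pm}: if the $F_i$'s all hold then a random point of $\P_\RMDim$ is non-zero with probability at least $\RMDist - 2\Robust$, which contradicts mixing unless $\sum_i \epsilon_i$ is large. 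Your version instead decodes a bivariate polynomial $R_i$ on each plane, tracks the assertion $R_i \neq \codeword^*_{\mid \P_i}$ by an induction driven by two sampling events per step (line $\ell_i$ both samples the $\Word$-vs-$R_{i-1}$ disagreement and hits the $R_{i-1}$-vs-$\codeword^*$ disagreement), and gets the contradiction at $\P_\RMDim$ against a separately-bounded mixing event $M$, combining via $\Pr[\cap S_i] - \Pr[M^c]$. The two are equivalent numerically (you use two applications of \cref{lem:sampling} per step with a combined budget under $4/|\F|$; the paper uses one application with budget exactly $4/|\F|$; both close the argument with \cref{claim:mixing-time-key-claim} and Markov). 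Your route arguably gives a cleaner intuitive picture of the ``wrong polynomial propagating along the walk''; the paper's route avoids having to decode at all and makes the disjointness of the robustly-rejecting events explicit, which is what lets it write the acceptance probability as a telescoping sum instead of a union bound.

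Two places need tightening. First, in the mixing step you write ``condition on everything except $(s'_0,\dots,s'_{\RMDim-1})$,'' but that would also fix $\vech'_0,\dots,\vech'_{\RMDim-1}$, and \cref{claim:mixing-time-key-claim} needs the $\H$-vectors to remain random (if they were fixed, say all-zero, the point would not mix at all). The correct statement is the paper's: condition on everything except $\{s'_r\}$ \emph{and} $\{\vech'_r\}$; after that the coefficients and the $\H$-vectors are jointly independent and uniform, and the claim applies. Your phrase ``independent of those $\H$-vectors'' indicates the right intent, but the conditioning as literally written is wrong. Second, the events $S_i$ refer to $R_{i-1}$, which only exists when $\Word_{\mid \P_{i-1}}$ is within $2\Robust$ of a bivariate codeword. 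To keep $\Pr[\cap S_i]$ meaningful as an unconditional quantity you should define $S_i$ to hold vacuously whenever $R_{i-1}$ does not exist; then the per-step bound $\Pr[S_i^c \mid \text{prior}] \le 4/|\F|$ holds for every realization of the prior, and $\Pr[\cap S_i] \ge \left(1-4/|\F|\right)^\RMDim$ follows. With those two fixes your proof is complete and matches the claimed $\Soundness$.
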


\begin{proof}
Consider an $\RWSteps$-steps consistency test using random walk on $\RM$ as in \cref{alg:H-CTRW}.
By construction, it is clear that whenever $\Word \in \RM$, the algorithm accepts.
It remains to prove the robust soundness of the algorithm.
Assume that $\Word$ is $\TestRadius-$close to some $\codeword^* \in \RM$. Note that since $\RM$ is a linear code, without loss of generality, we may assume that $\codeword^*$ is all-zeros codeword. Indeed, if $\Word$ is $\TestRadius$-close to some non-zero codeword $\codeword^*$, then we can consider the word $\Word' = \Word - \codeword^*$, which is $\TestRadius$-close to the all-zeros codeword, and behavior of the algorithm on both of these cases are the same. Hence, from now on we will assume that $\Word(\Index) \neq 0$, and $\Word$ is $\TestRadius-$close to the all-zeros codeword.
Below, we show that when running \cref{alg:H-CTRW} on such $\Word$, then for any choice of $\P_0$ we have either
\begin{equation}\label{eq:dist-P-0}
    \dist_{\Index}(\Word_{\mid \P_0}, \RM_{\mid \P_0}) \geq \Robust
\end{equation}
or
\begin{equation}\label{eq:H-ctrw-key-ineq}
    \Pr[\exists i \in [\RWSteps] \mathrm{\ s.t. \ } \mathrm{dist}_{\ell_i}(\Word_{\mid \P_i}, \RM_{\mid \P_i}) \geq \Robust] \geq \bigg(1 - \frac{4}{|\F|}\bigg)^{\RWSteps} - \frac{\TestRadius + \frac{2}{\abs{\SubF}}}{\RMDist-2\Robust}
    \enspace.
\end{equation}

\noindent
It is clear that each of \cref{eq:dist-P-0} and \cref{eq:H-ctrw-key-ineq} proves \cref{thm:H-ctrw}.

Clearly, if $\dist_{\Index}(\Word_{\mid \P_0}, \RM_{\mid \P_0}) \geq \Robust$, then we are done.
Hence, let us assume that $\dist_{\Index}(\Word_{\mid \P_0}, \RM_{\mid \P_0}) < \Robust$.
In particular, since $\Word(\Index) \neq 0$, $\Robust \leq \RMDist/8$, and $\dist_{\Index}(\Word_{\mid \P_0}, \RM_{\mid \P_0}) < \Robust$,
it follows that $\Word_{\mid \P_0}$ is $2\Robust$-close to some \emph{non-zero} codeword of $\RM_{\mid \P_0}$,
and hence $\Word_{\mid \P_0}$ contains at least $(\RMDist - 2\Robust)n^2$ non-zero entries.
For the rest of the proof we focus on proving \cref{eq:H-ctrw-key-ineq} assuming that $\Word_{\mid \P_0}$ contains at least $(\RMDist - 2\Robust)n^2$ non-zero entries.

In order to prove it, we introduce the events $E_i$ and $F_i$.

\begin{definition}
For $i \in [\RWSteps]$ denote by $E_i$ the event that $\Word_{\mid \ell_i}$ has at least $2\Robust \RSlen$ non-zeros, and $\Word_{\mid \P_i}$ has less than $(\RMDist-2\Robust)\RSlen^2$ non-zeros.
For $i \in [\RWSteps]$ denote by $F_i$ the event that $\Word_{\mid \ell_i}$ has at least $2 \Robust \RSlen$ non-zeros, and $\Word_{\mid \P_i}$ has at least $(\RMDist-2\Robust)\RSlen^2$ non-zeros.
\end{definition}

The following are the key observations about the event $E_i$
\begin{observation}
    If $E_i$ holds and $\Robust \leq \RMDist/4$, then
    \begin{enumerate}
      \item $\dist_{\ell_i}(\Word_{\mid \P_i}, \mathbf{0}) \geq \Robust$, since $\Word_{\mid \ell_i}$ has at least $2\Robust \RSlen$ non-zeros.
      \item $\dist_{\ell_i}(\Word_{\mid \P_i}, Q) \geq \Robust$ for all $\codeword \in \RM \setminus \{0\}$, since $\Word_{\mid \P_i}$ has less than $2\Robust \RSlen^2$ non-zeros.
    \end{enumerate}
    \noindent
    In particular, if $E_i$ holds, then $\dist_{\ell_i}(\Word_{\mid \P_i}, \RM_{\mid \P_i}) \geq \Robust$.
\end{observation}

For each $i \in [\RWSteps]$ denote
\begin{equation*}
    \epsilon_i = \Pr[(\wedge_{j=0}^{i-1} F_j) \bigwedge E_i ]
    \enspace.
\end{equation*}
Observe that the events corresponding to $\epsilon_i$'s are disjoint, and hence
\begin{equation*}\label{eqn:sumepsilons}
    \Pr[\exists i \in [\RWSteps] \mathrm{\ s.t. \ } \mathrm{dist}_{\ell_i}(\Word_{\mid \P_i}, \RM_{\mid \P_i}) \geq \Robust]
    \geq \sum_{i=1}^{\RWSteps}{\epsilon_i}
    \enspace.
\end{equation*}
The following two lemmas are the key steps in the proof of \cref{thm:H-ctrw}.
\begin{lemma}\label{lemma:random-z-in-Pm}
  For a uniformly random point $\vecz \in \P_\RMDim$ we have $\Pr[\Word(\vecz) \neq 0] \leq \TestRadius + \frac{2}{\abs{\SubF}}$.
\end{lemma}

\begin{lemma}\label{lemma:all-F-i-s-hold-whp}
    If $\dist_{\Index}(\Word_{\mid \P_0}, \RM_{\mid \P_0}) < \Robust$,
    then $\Pr[(\wedge_{i=1}^{\RWSteps} F_i)] > (1 - \frac{4}{|\F|})^{\RWSteps} - \sum_{i=1}^{\RWSteps} \epsilon_i$
    for all $\RWSteps \geq 1$.
\end{lemma}
We postpone the proofs of the lemmas for now,
and proceed with the proof of \cref{thm:H-ctrw} assuming the lemmas.

Note that if we choose a uniformly random $\vecz \in \P_\RMDim$, then
\begin{align*}
    \Pr[\Word(\vecz) \neq 0]
    \geq \Pr[(\wedge_{i=1}^{\RMDim} F_i) \wedge  \Word(\vecz) \neq 0]
    &= \Pr[(\wedge_{i=1}^{\RMDim} F_i)] \cdot \Pr[\Word(\vecz) \neq 0 \mid \wedge_{i=1}^{\RMDim} F_i] \\
    &\geq \Pr[(\wedge_{i=1}^{\RMDim} F_i)] \cdot (\RMDist-2\Robust)
    \enspace,
\end{align*}
where the last inequality is by noting that if we condition on $\wedge_{i=1}^{\RMDim} F_i$, then $\Word_{\mid \P_\RMDim}$ has at least $(\RMDist-2\Robust)\RSlen^2$ non-zeros, and hence $\Pr[\Word(\vecz) \neq 0 \mid \wedge_{i=1}^{\RMDim} F_i] \geq (\RMDist-2\Robust)$.
Therefore, by \cref{lemma:random-z-in-Pm} and \cref{lemma:all-F-i-s-hold-whp} it follows that
\begin{equation}\label{eqn:distance}
  \TestRadius + \frac{2}{\abs{\SubF}}
  \geq
  \Pr[\Word(\vecz) \neq 0]
  \geq
  \Pr[(\wedge_{i=1}^{\RMDim} F_i)] \cdot (\RMDist-2\Robust)
  \geq
  \left(\left(1 - \frac{4}{|\F|}\right)^{\RMDim} - \sum_{i=1}^{\RMDim} \epsilon_i \right) \cdot (\RMDist-2\Robust)
  \enspace,
\end{equation}
and hence
\begin{equation*}
    \Pr[\exists i \in [\RMDim] \mathrm{\ s.t. \ } \mathrm{dist}_{\ell_i}(\Word_{\mid \P_i}, \RM_{\mid \P_i}) \geq \Robust]
    \geq \sum_{i=1}^{\RMDim}{\epsilon_i}
    \geq \left(1 - \frac{4}{|\F|} \right)^{\RMDim} - \frac{\TestRadius + \frac{2}{\abs{\SubF}}}{\RMDist-2\Robust}
    \enspace.
\end{equation*}
This completes the proof of \cref{thm:H-ctrw}.
\end{proof}

We now return to the proof of \cref{lemma:random-z-in-Pm}.

\begin{proof}[Proof of \cref{lemma:random-z-in-Pm}]
Fix $\vecx_0$ in \cref{alg:H-CTRW}, and consider the independent choices of
$\{s_i,s'_i \in \Field\}_{i=0}^{\RMDim}$,
$\{t_i,t'_i \in \F\}_{i=0}^{\RMDim}$,
and $\{\vech'_i \in \SubF^\RMDim\}_{i=1}^{\RMDim}$.

Note first that for all $i \in [\RMDim]$ we have
\begin{equation*}
    \vech_i = (\prod_{j=0}^{i-1} t_j) \cdot \vech_0 + \sum_{u=0}^{i-1} (t'_u \cdot  \prod_{j=u+1}^{i-1} t_j) \cdot \vech'_u
    \enspace.
\end{equation*}
We prove this by induction. Indeed, by \cref{alg:H-CTRW}, we have $\vech_1 = t_0 \cdot \vech_0 + t'_0 \cdot \vech'_0$.
For the induction step, assume that the equation holds for some $i \in [\RMDim - 1]$.
Then for $i + 1$ we have
\begin{align*}
    \vech_{i+1} &= t_i \cdot \vech_i + t'_i \cdot \vech'_i \\
    &= t_i \cdot \left((\prod_{j=0}^{i-1} t_j) \cdot \vech_0 + \sum_{u=0}^{i-1} (t'_u \cdot  \prod_{j=u+1}^{i-1} t_j) \cdot \vech'_u \right) + t'_i \cdot \vech'_i \\
    &= (\prod_{j=0}^{i} t_j) \cdot \vech_0 + t_i \left(\sum_{u=0}^{i-1} (t'_u \cdot  \prod_{j=u+1}^{i-1} t_j) \cdot \vech'_u \right) + t'_i \cdot \vech'_i \\
    &= (\prod_{j=0}^{i} t_j) \cdot \vech_0 + \left(\sum_{u=0}^{i-1} (t'_u \cdot  \prod_{j=u+1}^{i} t_j) \cdot \vech'_u \right) + t'_i \cdot \vech'_i \\
    &= (\prod_{j=0}^{i} t_j) \cdot \vech_0 + \sum_{u=0}^{i} (t'_u \cdot  \prod_{j=u+1}^{i} t_j) \cdot \vech'_u
    \enspace,
\end{align*}
which concludes the induction step. Note that the first equation comes from the definition in \cref{alg:H-CTRW} and second equation follows from the induction hypothesis.
\par
Also, for all $i \in [\RMDim]$ we have
\begin{equation*}
    \vecx_i = \vecx_0 + \sum_{j=0}^{i-1} s_j \vech_j + \sum_{j=0}^{i-1} s'_j \vech'_j
    \enspace.
\end{equation*}
Again, we prove this by induction.
By \cref{alg:H-CTRW}, we have $\vecx_1 = \vecx_0 + s_0 \cdot \vech_0 + s'_0 \cdot \vech'_0$.
For the induction step, if we assume that the equation holds for some $i \in [\RMDim - 1]$,
then for $i+1$ we have
\begin{align*}
    \vecx_{i+1} &= \vecx_i + s_i \cdot \vech_i + s'_i \cdot \vech'_i \\
    &= \left(\vecx_0 + \sum_{j=0}^{i-1} s_j \vech_j + \sum_{j=0}^{i-1} s'_j \vech'_j \right) + s_i \cdot \vech_i + s'_i \cdot \vech'_i \\
    &= \vecx_0 + \sum_{j=0}^{i} s_j \vech_j + \sum_{j=0}^{i} s'_j \vech'_j
\end{align*}
This, completes the induction step. Note that the first equation comes from the definition in \cref{alg:H-CTRW} and second equation follows from the induction hypothesis. \par

Let $\P_\RMDim = \Plane{\vecx_\RMDim}{\vech_\RMDim}{\vech'_\RMDim}$
Note that we can sample $\vecz \in \P_\RMDim$ uniformly by choosing $s_{\RMDim}, s'_{\RMDim} \in \Field$,
and letting $\vecz = \vecx_\RMDim + s_{\RMDim} \cdot \vech_\RMDim + s'_{\RMDim} \vech'_{\RMDim}$.
Therefore,
\begin{align*}
    \vecz
    &= \left( \vecx_0 + \sum_{i=0}^{\RMDim} s_i \vech_i \right) + \left( \sum_{r=0}^{\RMDim} s'_r \vech'_r \right) \\
    &= \left( \vecx_0 + s_0 \vech_0 + \sum_{i=1}^{\RMDim} s_i \left( (\prod_{j=0}^{i-1} t_j) \cdot \vech_0
        + \sum_{r=0}^{i-1} (t'_r \cdot  \prod_{j=r+1}^{i-1} t_j) \cdot \vech'_r \right) \right) + \left( \sum_{r=0}^{\RMDim} s'_r \vech'_r \right) \\
    &= \left( \vecx_0 + (s_0 + \sum_{i=1}^{\RMDim} s_i (\prod_{j=0}^{i-1} t_j)) \cdot \vech_0 \right)
        + \sum_{i=1}^{\RMDim} s_i \left( \sum_{r=0}^{i-1} (t'_r \cdot  \prod_{j=r+1}^{i-1} t_j) \cdot \vech'_r \right)
        + \left( \sum_{r=0}^{\RMDim} s'_r \vech'_r \right) \\
    &= \left( \vecx_0 + (s_0 + \sum_{i=1}^{\RMDim} s_i (\prod_{j=0}^{i-1} t_j)) \cdot \vech_0 \right)
        + \sum_{r=0}^{\RMDim} \left( \sum_{i=r+1}^{\RMDim} (s_i t'_r \cdot  \prod_{j=r+1}^{i-1} t_j) +  s'_r \right) \vech'_r
    \enspace.
\end{align*}
Next, we fix all random choices except for $\{s'_r\}$ and $\{\vech'_r\}$,
and apply \cref{claim:mixing-time-key-claim}. Let $A$ be the set of indices $\Index$ such that $\Word(\Index) \neq 0$. Since $\Word$ is $\TestRadius$-close to all-zeros codeword, it immediately follows that $\abs{A} = \TestRadius \cdot \abs{\F^{\RMDim}}$. Since each $s'_r$ is chosen uniformly at random from $\F$, and each $\vech'_r$ is chosen uniformly at random from $\SubF^\RMDim$, by applying \cref{claim:mixing-time-key-claim} with respect to $A$, we have
\begin{align*}
  \Pr[\Word(\vecz) \neq 0]
        &= \Pr\left[\left( \vecx_0 + (s_0 + \sum_{i=1}^{\RMDim} s_i (\prod_{j=0}^{i-1} t_j)) \cdot \vech_0 \right)
        + \sum_{r=0}^{\RMDim} \left( \sum_{i=r+1}^{\RMDim} (s_i t'_r \cdot  \prod_{j=r+1}^{i-1} t_j) +  s'_r \right) \vech'_r \in A\right] \\
        &\leq \TestRadius + \frac{2}{\SubF} \enspace,
\end{align*}
which completes the proof of \cref{lemma:random-z-in-Pm}.
\end{proof}

Next we prove \cref{lemma:all-F-i-s-hold-whp}.

\begin{proof}[Proof of \cref{lemma:all-F-i-s-hold-whp}]
We lower-bound the value of $\Pr[(\wedge_{i=1}^{\RWSteps} F_i)]$ by peeling off one $F_i$ at a time.
Observe that for every $i \in [\RWSteps]$ we have
\begin{align}
\Pr[(\wedge_{j=1}^{i-1} F_j) \wedge  \mbox{$\ell_i$ has at least $2\Robust \RSlen$ non-zeros}]
&= \Pr[(\wedge_{j=1}^{i-1} F_j) \wedge F_i] + \Pr[(\wedge_{j=1}^{i-1} F_j) \wedge E_i] \nonumber \\
&= \Pr[(\wedge_{j=1}^{i} F_j)] + \epsilon_i \enspace. \label{eq:equality}
\end{align}

We will use the following claim.

\begin{claim}\label{claim:sampling-ell-r-from-P-r-1}
    For all $i \in [\RWSteps]$ if $\Robust \leq \RMDist/8$, then
    $\Pr[\mbox{$\ell_i$ has at least $2\Robust \RSlen$ non-zeros} \mid \wedge_{j=1}^{i-1} F_j]
     \geq \left( 1 - \frac{4}{|\F|} \right)$.
\end{claim}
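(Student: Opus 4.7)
The plan is to apply (a slight refinement of) the sampling lemma, \cref{lem:sampling}, to the plane $\P_{i-1}$ in its natural parameterization over $\F^2$. First I would condition on any realization of the algorithm's randomness through iteration $i-1$ for which $\wedge_{j=1}^{i-1} F_j$ holds; this fixes the plane $\P_{i-1}=\Plane{\vecx_{i-1}}{\vech_{i-1}}{\vech'_{i-1}}$, and the event $F_{i-1}$ ensures that the set
\[
A = \bigl\{(s,s') \in \F^2 : \Word(\vecx_{i-1} + s \vech_{i-1} + s' \vech'_{i-1}) \neq 0\bigr\}
\]
has density $\mu = |A|/\RSlen^2 \geq \RMDist - 2\Robust$. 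Because this conditioning only restricts the past, the fresh samples $(s_{i-1}, s'_{i-1})$ and $(t_{i-1}, t'_{i-1})$ used to build $\ell_i$ remain independent and uniform in $\F^2$.

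Next I would observe that under this parameterization the line $\ell_i$ is exactly a uniformly chosen line in $\F^2$ with base point $(s_{i-1}, s'_{i-1})$ and direction $(t_{i-1}, t'_{i-1})$. The number of non-zeros of $\Word_{\mid \ell_i}$ is therefore
\[
X = \bigl|\{t \in \F : (s_{i-1} + t \cdot t_{i-1},\, s'_{i-1} + t \cdot t'_{i-1}) \in A\}\bigr|,
\]
which is precisely the random variable analyzed in the proof of \cref{lem:sampling}.

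I would then rerun the Chebyshev-on-a-pairwise-independent-sum argument of \cref{lem:sampling}, but using the tighter pointwise variance bound $\Var[X_t] = \mu(1-\mu) \leq 1/4$ in place of the cruder $\mu(1-\mu) \leq \mu$ used there. Choosing $\epsilon = \mu - 2\Robust$ so that $X \geq (\mu-\epsilon)\RSlen$ translates to $X \geq 2\Robust \cdot \RSlen$, this yields
\[
\Pr\bigl[\, X < 2\Robust \cdot \RSlen \,\bigr] \leq \frac{1}{4(\mu - 2\Robust)^2 \RSlen}.
\]
From $\mu \geq \RMDist - 2\Robust$, $\Robust \leq \RMDist/8$, and $\RMDist \geq 3/4$ we obtain $\mu - 2\Robust \geq \RMDist/2 \geq 3/8$, so the bound simplifies to at most $16/(9\RSlen) \leq 4/|\F|$, which proves the claim. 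The only mildly delicate point is this sharper variance estimate; applying \cref{lem:sampling} as stated would produce a somewhat larger constant in place of $4$, but the structure of the argument is otherwise identical, so I anticipate no real obstacle.
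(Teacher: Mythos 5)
Your proposal is correct and follows essentially the same route as the paper: condition on $F_{i-1}$ to get that the non-zeros have density at least $\RMDist-2\Robust$ on $\P_{i-1}$, observe that $\ell_i$ is a uniformly random line in that plane, and apply the Chebyshev-based sampling bound of \cref{lem:sampling} with $\epsilon=\mu-2\Robust$. The sharper variance estimate $\Var[X_t]\leq 1/4$ is not actually needed: applying \cref{lem:sampling} as stated with the worst case $\mu=\RMDist-2\Robust$ gives $\frac{1}{|\F|}\cdot\frac{\RMDist-2\Robust}{(\RMDist-4\Robust)^2}\leq\frac{3}{\RMDist|\F|}\leq\frac{4}{|\F|}$, which is exactly the paper's computation.
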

\begin{proof}
The proof is rather immediate from \cref{lem:sampling}.
Let $\P_{i-1} = \Plane{\vecx_{i-1}}{h_{i-1}}{h_{i-1}'}$ be the plane chosen by \Cref{alg:H-CTRW} in the iteration $i-1$.
Note that conditioning on $(\wedge_{j=1}^{i-1} F_j)$ implies that $f_{\mid \P_{i-1}}$ has at least $(\RMDist-2\Robust)\RSlen^2$ non-zeros.%
\footnote{This follows only from conditioning on $F_{i-1}$, and the other $F_j$'s are irrelevant.}

Since $\ell_i$ is a uniformly random line in $\P_{i-1}$, by \cref{lem:sampling} it follows that
\begin{align*}
    \Pr[\mbox{$f_{|\ell_i}$ has less than $2\Robust \RSlen$ non-zeros}]
    &\leq \frac{1}{\abs{\F}} \cdot \frac{\RMDist-2\Robust}{(\RMDist-2\Robust-2\Robust)^2} \\
    &\leq \frac{4}{\abs{\F}}
    \enspace,
\end{align*}
where the last inequality is by the assumption that $\Robust \leq \RMDist/8$ and $\RMDist \geq 3/4$.
This completes the proof of \cref{claim:sampling-ell-r-from-P-r-1}.
\end{proof}
By applying \cref{claim:sampling-ell-r-from-P-r-1} we get
\begin{align}
\Pr[(\wedge_{j=1}^{i-1} F_j) \wedge  \mbox{$\ell_i$ has at least $2\Robust \RSlen$ non-zeros}]
&= \Pr[(\wedge_{j=1}^{i-1} F_j)] \cdot \Pr[\mbox{$\ell_i$ has at least $2\Robust \RSlen$ non-zeros} \mid \wedge_{j=1}^{i-1} F_j] \nonumber \\
&\geq \Pr[(\wedge_{j=1}^{i-1} F_j)] \cdot \left( 1 - \frac{4}{|\F|} \right)
\enspace. \label{eq:error-2-over-H}
\end{align}
Combining \cref{eq:equality} with \cref{eq:error-2-over-H} together we get
\begin{equation}\label{eq:Pr[F_i's]-peel-off}
    \Pr[(\wedge_{j=1}^{i} F_j)]
    \geq \Pr[(\wedge_{j=1}^{i-1} F_j)] \cdot \left( 1 - \frac{4}{|\F|} \right) - \eps_i
    \enspace.
\end{equation}
By exactly the same argument, using the assumption that $\Word_{\mid \P_0}$ contains at least $(\RMDist - 2\Robust)n^2$ non-zeros,
it follows that
\begin{equation}\label{eq:Pr[F-1]}
    \Pr[F_1] \geq 1 - \frac{4}{|\F|} - \eps_1
    \enspace.
\end{equation}
The rest of the proof follows by induction, peeling off one $F_i$ at a time,
and applying \cref{eq:Pr[F_i's]-peel-off}.
\begin{align*}
    \Pr[(\wedge_{i=1}^{\RWSteps} F_i)]
    &\geq \Pr[(\wedge_{i=1}^{\RWSteps-1} F_i)] \cdot \left( 1 - \frac{4}{|\F|} \right) - \eps_\RWSteps \\
    &\geq \Bigg(\Pr[(\wedge_{i=1}^{\RWSteps-2} F_i)] \cdot \left( 1 - \frac{4}{|\F|} \right) - \eps_{\RWSteps-1} \Bigg)\left( 1 - \frac{4}{|\F|} \right) - \eps_\RWSteps \\
    &= \Pr[(\wedge_{i=1}^{\RWSteps-2} F_i)] \cdot \left(1 - \frac{4}{|\F|}\right)^2 -\left(1 - \frac{4}{|\F|}\right)\epsilon_{\RWSteps - 1} - \epsilon_{\RWSteps} \\
    &\geq \dots \\
    &\geq \Pr[F_1] \cdot \left(1 - \frac{4}{|\F|}\right)^{\RWSteps-1} - \sum_{i=1}^{\RWSteps}{\left(1 - \frac{4}{|\F|}\right)^{\RWSteps-i}\cdot \epsilon_i} \\
    &\geq \left(1 - \frac{4}{|\F|} - \epsilon_1 \right) \cdot \left(1 - \frac{4}{|\F|} \right)^{\RWSteps-1} - \sum_{i=1}^{\RWSteps}{\left(1 - \frac{4}{|\F|}\right)^{\RWSteps-i} \cdot \epsilon_i} \\
    &= (1 - \frac{4}{|\F|})^{\RWSteps} - \sum_{i=1}^{\RWSteps}{\left(1 - \frac{4}{|\F|} \right)^{\RWSteps-i} \cdot \epsilon_i} \\
    &> (1 - \frac{4}{|\F|})^{\RWSteps} - \sum_{i=1}^{\RWSteps} \epsilon_i
    \enspace.
\end{align*}
We get that $\Pr[(\wedge_{i=1}^{\RWSteps} F_i)] > (1 - \frac{4}{|\F|})^{\RWSteps} - \sum_{i=1}^{\RWSteps} \epsilon_i$, which concludes
\cref{lemma:all-F-i-s-hold-whp}.
\end{proof}
\doclearpage
\section{PCPs of proximity}
\label{sec:PCPP}

In this section we explain how to construct PCPP systems for the languages $\RM_{\mid \P}^{(\vecx)}$ and $\RM_{\mid \P}^{(\ell)}$
defined in \cref{def:RM_P^x,,def:RM_P^ell}.
Note that since all planes in $\Field^\RMDim$ are isomorphic, we may think of each
$\RM_{\mid \P}^{(\vecx)}$ and $\RM_{\mid \P}^{(\ell)}$ as the language $\RM_\Field(2,\RMDeg)$ of bivariate polynomials of total degree at most $\RMDeg$,
concatenated with repetitions of their values at $\vecx$ and $\ell$ respectively.
Following \cref{notation:f_P^x} and \cref{notation:f_P^ell},
we make the following definition.

\begin{definition}\label{def:concat-repetition}
Let $\F$ be a field of size $\abs{\F} = \RMLen$,
and let $\Word \colon \F^2 \to \F$ be an $\F$-valued function.
Let $\Point \in \F^2$ be a point in $\F^2$,
$\ell \seq \Field^2$ be the line $\ell = \Line{\Point_1}{\Point_2} = \{ \Point_1 + t \cdot \Point_2 \colon t \in \F \}$ for some $\Point_1, \Point_2 \in \F^2$.
\begin{itemize}
    \item Define $\Word^{(\Point)}$ to be the concatenation of $\Word$ with $\RMLen^2$ repetitions of the value of $\Word$ at the point $\Point$, i.e., $\Word^{(\Point)} = \Word \circ (\Word(\Point))^{\RMLen^2}$.
    \item Define $\Word^{(\ell)}$ to be the concatenation of $\Word$ with $\RMLen$ repetitions of the restriction of $\Word$ to line $\ell$, i.e., $\Word^{(\ell)} = \Word \circ (\Word_{\mid \ell})^\RMLen$.
\end{itemize}

\noindent
Define $\RM^{(\Point)} = \RM^{(\Point)}_{\Field}(2,\RMDeg) = \{ Q^{(\Point)} \colon Q \in \RM_{\F}(2,\deg)\}$
and $\RM^{(\ell)} = \RM^{(\ell)}_{\Field}(2,\RMDeg) = \{ Q^{(\ell)} \colon Q \in \RM_{\F}(2,\deg)\}$.
\end{definition}

Note that given an oracle access to $\Word$ we can query every coordinate of $\Word^{(\Point)}$ and $\Word^{(\ell)}$
by querying one coordinate of $\Word$. In particular, any PCPP system for $\Word^{(\Point)}$
can be emulated when given access to $\Word$ without increasing the query complexity of the proof system.

The following observation is immediate from \cref{def:concat-repetition}.
\begin{observation}\label{obs:language-dist}
Let $\F$ be a field of size $\abs{\F} = \RMLen$,
Let $\Point \in \F^2$ be a point in $\F^2$,
$\ell \seq \Field^2$ be the line $\ell = \Line{\Point_1}{\Point_2} = \{ \Point_1 + t \cdot \Point_2 \colon t \in \F \}$ for some $\Point_1, \Point_2 \in \F^2$.
Then for any $\Word \colon \F^2 \to \F$ we have
\begin{equation*}
    \dist_{\Point}(\Word, \RM) = \dist(\Word^{(\Point)}, \RM^{(\Point)})
\quad
\mbox{and}
\quad
    \dist_{\ell}(\Word, \RM) = \dist(\Word^{(\ell)}, \RM^{(\ell)})
    \enspace.
\end{equation*}
\end{observation}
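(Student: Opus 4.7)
The plan is to prove both equalities by directly computing $\dist(\Word^{(\Point)}, Q^{(\Point)})$ and $\dist(\Word^{(\ell)}, Q^{(\ell)})$ for a fixed codeword $Q \in \RM_{\F}(2,\deg)$, matching them term-by-term with $\dist_{\Point}(\Word, Q)$ and $\dist_{\ell}(\Word, Q)$ respectively, and then taking minima over $Q \in \RM$. Since $\RM^{(\Point)}$ and $\RM^{(\ell)}$ are defined as the images of $\RM$ under the bijective maps $Q \mapsto Q^{(\Point)}$ and $Q \mapsto Q^{(\ell)}$, the equality of distances to the sets follows immediately from the equality of distances to each codeword.

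For the point case, I would observe that $\Word^{(\Point)}$ has length $2 \RMLen^2$: the first $\RMLen^2$ coordinates encode $\Word$ on $\F^2$, and the remaining $\RMLen^2$ coordinates are all equal to $\Word(\Point)$. For a fixed $Q \in \RM$, the number of disagreements on the first block is $|\{x \in \F^2 : \Word(x) \neq Q(x)\}|$, while the second block contributes either $0$ disagreements (if $\Word(\Point) = Q(\Point)$) or $\RMLen^2$ disagreements (otherwise). Dividing by the total length $2\RMLen^2$, this is exactly
\[
\dist(\Word^{(\Point)}, Q^{(\Point)}) = \frac{|\{x \in \F^2 : \Word(x) \neq Q(x)\}|}{2 \RMLen^2} + \frac{\mathbf{1}[\Word(\Point) \neq Q(\Point)]}{2},
\]
which matches Definition~\ref{def:weighted-distance} applied to $A = \{\Point\}$, $n = \RMLen^2$.

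For the line case, $\Word^{(\ell)}$ again has total length $2\RMLen^2$, but now the second block consists of $\RMLen$ copies of $\Word_{\mid \ell}$, each of length $\RMLen$. The number of disagreements with $Q^{(\ell)}$ in the first block is $|\{x \in \F^2 : \Word(x) \neq Q(x)\}|$, and in the second block it is $\RMLen \cdot |\{i \in \ell : \Word(i) \neq Q(i)\}|$. Dividing by $2\RMLen^2$, the $\RMLen$ factor cancels with one of the $\RMLen$'s in the denominator, and the expression becomes precisely $\dist_{\ell}(\Word, Q)$ as defined with $A = \ell$ (size $\RMLen$) and $n = \RMLen^2$.

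Taking minima over $Q \in \RM$ and using the bijection between $\RM$ and $\RM^{(\Point)}$ (respectively $\RM^{(\ell)}$) yields both equalities. I do not anticipate any obstacles beyond bookkeeping with the lengths and weights; the entire point of the repetition in Notations~\ref{notation:f_P^x} and~\ref{notation:f_P^ell} (using $\RMLen^2$ copies of a single value and $\RMLen$ copies of a length-$\RMLen$ string) is precisely to make the standard Hamming distance on the repeated word reproduce the $1/2$--$1/2$ reweighting that defines $\dist_A$.
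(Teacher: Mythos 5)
Your proposal is correct and is exactly the computation the paper has in mind: the paper states this observation as ``immediate from the definition'' and gives no written proof, and your term-by-term accounting of the two blocks (together with the bijections $Q \mapsto Q^{(\Point)}$ and $Q \mapsto Q^{(\ell)}$ before taking minima) is precisely the bookkeeping that justifies it. No gaps.
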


It is clear that the languages $\RM^{(\ell)}$ and $\RM^{(\Point)}$ can be solved in polynomial time.
Therefore, by Theorem 6.1 in \cite{CGS20}, these languages admit a ccPCPP with the appropriate parameters.

\begin{theorem}[Canonical PCPP for $\RM$]\label{thm:cPCPP-for-RM}
Let $\Field$ be a finite field and let $\RMDeg \in \N$ be a parameter.
Let $\Language$ be either $\RM^{(\ell)}_{\Field}(2,\RMDeg)$ or $\RM^{(\Point)}_{\F}(2,\RMDeg)$.
Then $\Language$ admits a ccPCPP with the following parameters
\begin{enumerate}
  \item The ccPCPP verifier has perfect completeness and soundness $\Soundness_{PCPP} = 0.5$
  for any proximity parameter $\Robust>0$.
  \item The query complexity of the verifier is $\Query = O(1/\Robust)$.
  \item The length of canonical proof $\pi(f)$ for $f \in \Language$ of length $n$ is $\PCPPLen(n) = \poly(n)$.
  \item The language $\Pi_{\Language} = \{f \circ \pi(f) : f \in \Language\}$ is a $(\DecRad,\SoundnessInnerRLCC)$-RLCC
    with query complexity $\Query = O(1)$,
    constant correction radius $\DecRad = \Omega(1)$,
    and constant soundness $\Soundness = \Omega(1)$.
\end{enumerate}
\end{theorem}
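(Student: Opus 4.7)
The plan is to exhibit each of $\RM^{(\Point)}_\F(2,\RMDeg)$ and $\RM^{(\ell)}_\F(2,\RMDeg)$ as a language in $\PClass$, and then invoke, in a black-box fashion, the ccPCPP construction of \cite{CGS20} (their Theorem~6.1), which converts any language in $\PClass$ into a ccPCPP with exactly the four properties stated here.

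First I would verify that both languages lie in $\PClass$. Given a candidate string of length $2\RMLen^2$ over $\F$, a polynomial-time decider interprets the first $\RMLen^2$ coordinates as a function $Q \colon \F^2 \to \F$ and, by solving a linear system over $\F$, determines whether $Q$ is the evaluation of a bivariate polynomial of total degree at most $\RMDeg$; if so, it extracts the coefficients. For $\RM^{(\Point)}$ it then checks that the remaining $\RMLen^2$ coordinates are all equal to the scalar $Q(\Point)$, and for $\RM^{(\ell)}$ it checks that they consist of $\RMLen$ copies of the univariate restriction $Q_{\mid \ell}$. Both tests run in $\poly(\RMLen,\RMDeg)$ time.

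Next, I would obtain items (1)--(3) by feeding $\Language$ into \cref{thm:PCPP} of \cite{DinurGG18,Paradise20}, which for any $\Robust > 0$ yields a canonical PCPP with perfect completeness, soundness $1/2$, query complexity $O(1/\Robust)$, and canonical-proof length $\poly(\RMLen)$. To obtain item (4), I would upgrade this canonical PCPP to a correctable one following the recipe of \cite{CGS20}: for each $f \in \Language$, define the augmented canonical proof $\pi^*(f)$ by appending to $\pi(f)$ the redundant portion of the systematic RLCC encoding $\Code_{\rm CGS}(f \circ \pi(f))$ from \cref{thm:CGS_RLCC}, so that $\{f \circ \pi^*(f) : f \in \Language\}$ coincides with a subcode of $\Code_{\rm CGS}$. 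The RLCC guarantees of $\Code_{\rm CGS}$, combined with the canonical soundness of $\pi$, then yield an $O(1)$-query relaxed local corrector for $\Pi_{\Language}$ with constant correction radius $\DecRad$ and constant soundness, establishing item (4).

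The only step that is not entirely black-box is the composition in item (4), but since \cite{CGS20} performs this composition abstractly for every language in $\PClass$, all four items follow once the $\PClass$-membership check above is in place, with no further combinatorial work required.
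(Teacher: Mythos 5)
Your proposal matches the paper's own proof: the paper likewise observes that $\RM^{(\Point)}_{\F}(2,\RMDeg)$ and $\RM^{(\ell)}_{\F}(2,\RMDeg)$ are decidable in polynomial time and then invokes Theorem~6.1 of \cite{CGS20}, which combines the canonical PCPP of \cref{thm:PCPP} with a systematic RLCC encoding of $w \circ \pi(w)$ to produce the augmented canonical proof $\pi^*(w)$, exactly as you describe. The paper is in fact terser than your write-up (it omits the explicit $\PClass$-membership check), so no further work is needed.
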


Informally speaking, in order to prove \cref{thm:cPCPP-for-RM} \cite{CGS20} start with a cPCPP system from \cref{thm:PCPP},
and for every $w \in \Language$, and a canonical proof $\pi(w)$,
define a correctable proof $\pi^*(w)$ by encoding $w \circ \pi(w)$ using a systematic RLCC with constant distance and polynomial block length
(e.g., the one from \cite{GurRR18} or \cite{CGS20}).
Since the RLCC is systematic, the encoding is of the form $w \circ \pi(w) \circ \pi'(w)$
for some string $\pi'(w)$ of length $\poly(\abs{w})$.
Then, define the canonical proof to be $\pi^*(w) = \pi(w) \circ \pi'(w)$.
It is rather straightforward to define a verifier that will satisfy the requirements of \cref{thm:cPCPP-for-RM}.
We omit the details from here, and refer the interested reader to Theorem 6.1 in \cite{CGS20}.

\doclearpage
\section{Composition theorem and the local correcting algorithm}
\label{sec:composition-local-alg}
In this section we present a composition theorem used to combine the $\CTRW$ for Reed-Muller codes with appropriate PCPPs.
This composition theorem immediately implies the statement of \cref{thm:main}, albeit for a large alphabet.

\subsection{Composition theorem using $\CTRW$}
Below we prove that if $\RM_\Field(\RMDim,\RMDeg)$ admits an $\RMDim$-steps $\CTRW$,
then it can be composed with a PCPP system with appropriate parameters to obtain an RLCC with query complexity $O(\RMDim)$.
The composition theorem that we present is a slightly modified version of the composition theorem presented in \cite{CGS20}.
The main difference compared to \cite{CGS20} is that we consider two types of PCPP proofs for $\RM$.

\begin{theorem}[Composition theorem for Reed-Muller codes]
  \label{thm:composition-rw}
  Let $\Field$ be a finite field of size $\abs{\Field} = \RMLen$, let $\RMDim,\RMDeg \in \N$ be parameters,
  and let $\RMDist = 1-\frac{\RMDeg}{\RMLen}$.
  Suppose that $\H$ is a subfield of $\Field$ such that $[\Field:\H] = \RMDim$.
  Consider the following components.
  \begin{itemize}
    \item Reed-Muller code $\RM_{\Field}(\RMDim,\RMDeg) \colon \Field^\CodeDim \to \Field^{\RMLen^\RMDim}$
    that admits an $\RMDim$-steps $\H$-plane-line-$\CTRW$ with the following parameters.
    \begin{enumerate}
    \item $\CTRW$ has perfect completeness and $(\TestRadius, \Robust, \SoundnessRW)$-robust soundness.
    \item The total number of predicates (of both types) defined for the $\CTRW$ is at most $B$.
    \end{enumerate}

    \item Canonical PCPP systems for languages of the form $\RM_{\mid \P}^{(\vecx)}$ and $\RM_{\mid \P}^{(\ell)}$ with the following properties.
    \begin{enumerate}
      \item For each $\Word_{\mid \P}^{(\vecx)} \in \RM_{\mid \P}^{(\vecx)}$ the length of the canonical proof is at most $\PCPPLen(2\RMLen^2)$.
      \item For each $\Word_{\mid \P}^{(\ell)} \in \RM_{\mid \P}^{(\ell)}$ the length of the canonical proof is at most $\PCPPLen(2\RMLen^2)$.
      \item The verifier has query complexity $\QueryPCPP$,
            perfect completeness, soundness $\SoundnessPCPP < 1$ for proximity parameter $\Robust$.
      \item The codes $\Pi_{\RM_{\mid \P}^{(\vecx)}} = \{\Word_{\mid \P}^{(\vecx)} \circ \Proof(\Word_{\mid \P}^{(\vecx)}) : \Word_{\mid \P}^{(\vecx)} \in \RM_{\mid \P}^{(\vecx)}\}$
      and
      $\Pi_{\RM_{\mid \P}^{(\ell)}} = \{\Word_{\mid \P}^{(\ell)} \circ \Proof(\Word_{\mid \P}^{(\ell)}) : \Word_{\mid \P}^{(\ell)} \in \RM_{\mid \P}^{(\ell)}\}$
      are $(2\Robust,\SoundnessInnerRLCC)$-RLCC with query complexity $\QueryPCPP$.
    \end{enumerate}

  \end{itemize}
  Then, there exists a code $\ComposedCode \colon \F^{\CodeDim} \to \F^{\BlockLength}$
  with block length $\BlockLength \leq \RMLen^\RMDim + 2 B \cdot \PCPPLen(2\RMLen^2)$
  and relative distance at least $\frac{1}{2} \left(1 - \frac{\RMDeg}{\RMLen}\right)$.

  The code $\ComposedCode$ is a $(\DecRad,\SoundnessRLCC)$-RLCC with query complexity $\QueryRLCC = (\RMDim + 3) \cdot \QueryPCPP$,
  where the decoding radius of $\ComposedCode$ is $\DecRad = \TestRadius/4$
  and the soundness is
  \begin{equation*}
    \SoundnessRLCC = \min   \left(
                                \frac{\SoundnessRW \cdot (1-\SoundnessPCPP) \cdot \RMDist}{2},
                                \frac{\SoundnessInnerRLCC}{2}
                            \right)
    \enspace.
  \end{equation*}
\end{theorem}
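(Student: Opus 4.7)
The plan is to follow the blueprint sketched in \cref{sec:proof-overview}. I would define $\ComposedCode(\message)$ to consist of two parts: the Reed-Muller codeword $\RM(\message)$ in its first $\RMLen^\RMDim$ coordinates, followed by a concatenation of canonical ccPCPP proofs, one for each of the $B$ predicates produced by the $\CTRW$, of the form $\Word_{\mid \P}^{(\vecx)} \in \RM_{\mid \P}^{(\vecx)}$ or $\Word_{\mid \P}^{(\ell)} \in \RM_{\mid \P}^{(\ell)}$. Each ccPCPP proof has length at most $2\PCPPLen(2\RMLen^2)$, where the extra factor $2$ absorbs the systematic RLCC padding that turns a plain cPCPP into a ccPCPP as in \cref{thm:cPCPP-for-RM}; this yields the claimed block length. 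The distance bound $\RMDist/2$ is immediate since the RM part occupies at least half of $\BlockLength$ and has relative distance $\RMDist$.

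The local correcting algorithm branches on whether the queried coordinate $i$ lies in the RM part or in a proof segment. If $i$ is in the RM part, let $\vecx \in \F^\RMDim$ be the corresponding point; I first invoke \cref{alg:H-CTRW} on $(\Word,\vecx)$ to sample the $\H$-planes $\P_0,\dots,\P_\RMDim$ together with the bridging lines $\ell_1,\dots,\ell_\RMDim$. Rather than reading each $\Word_{\mid \P_i}$ in full, I invoke the canonical PCPP verifier for the predicate at $(\P_0,\vecx)$ and for each $(\P_i,\ell_i)$, using the proofs embedded in $w$. If any verifier rejects I return $\bot$; otherwise I return $\Word(\vecx)$. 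If $i$ lies inside the proof segment of some predicate, I instead invoke the inner RLCC decoder guaranteed by condition~$4$ of the ccPCPP hypothesis on the implicit codeword $\Word_{\mid \P}^{(\cdot)} \circ \pi$ determined by $w$. The total query complexity is bounded by $(\RMDim+3)\QueryPCPP$.

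The heart of the soundness argument is the RM branch. Suppose $w$ is $\DecRad = \TestRadius/4$-close to some $c^* = \ComposedCode(\message)$. Because the RM portion is at least half of $\ComposedCode$, the restriction $\Word$ of $w$ to the RM coordinates is $\TestRadius/2$-close to $\codeword^* = \RM(\message)$, which is the unique RM codeword within $\RMDist/2$ of $\Word$. The nontrivial case is $\Word(\vecx) \neq \codeword^*(\vecx) = c^*_i$. Applying the $(\TestRadius,\Robust,\SoundnessRW)$-robust soundness of $\CTRW$ (\cref{thm:H-ctrw}), with probability at least $\SoundnessRW$ the sampled walk exposes either $\dist_{\vecx}(\Word_{\mid \P_0},\RM_{\mid \P_0}) \geq \Robust$ or $\dist_{\ell_i}(\Word_{\mid \P_i},\RM_{\mid \P_i}) \geq \Robust$ for some $i \in [\RMDim]$. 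By \cref{obs:language-dist} this is precisely the uniform distance of $\Word_{\mid \P_0}^{(\vecx)}$ from $\RM_{\mid \P_0}^{(\vecx)}$, respectively of $\Word_{\mid \P_i}^{(\ell_i)}$ from $\RM_{\mid \P_i}^{(\ell_i)}$, so by canonical PCPP soundness the corresponding verifier rejects with probability at least $1-\SoundnessPCPP$ regardless of the adversarial proof. Multiplying the two independent contributions yields the first term of $\SoundnessRLCC$; the residual $\RMDist/2$ is a technical artifact of the accounting that transfers the global decoding radius $\DecRad$ on $w$ into the local input seen by the $\CTRW$.

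The main obstacle will be the PCPP branch. There I must argue that a global $\DecRad$-corruption of $w$ still leaves the implicit input to the relevant inner RLCC ($\Pi_{\RM_{\mid \P}^{(\vecx)}}$ or $\Pi_{\RM_{\mid \P}^{(\ell)}}$) within its own $2\Robust$-radius. This is nontrivial because an adversary could try to concentrate errors in a single predicate's segment, and is precisely why $\DecRad$ is set to $\TestRadius/4$: the repetition padding built into $\Word_{\mid \P}^{(\cdot)}$ (of length $2\RMLen^2$) keeps the two sides of the implicit codeword in balanced weight, while the remaining $\DecRad$ budget covers both the plane-restriction of the RM part and the proof segment. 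Once this localization is in hand, the $(2\Robust,\SoundnessInnerRLCC)$-RLCC property directly delivers output in $\{c^*_i,\bot\}$ with probability $\SoundnessInnerRLCC$, and the trailing $1/2$ in $\SoundnessInnerRLCC/2$ is a loss from that distance-doubling step. Beyond this localization the remainder of the argument is a routine union bound and assembly of the stated parameters.
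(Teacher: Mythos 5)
There are two genuine gaps. First, your construction stores only a \emph{single} copy of $\RM(\message)$ followed by the $B$ proof segments, and then asserts that ``the RM part occupies at least half of $\BlockLength$.'' This is false for that construction: the proof part has total length $B\cdot\PCPPLen(2\RMLen^2)$, which dwarfs $\RMLen^\RMDim$ (recall $B$ is on the order of $\RMLen^{\RMDim+4}$), so the RM part is a vanishing fraction of the codeword. This breaks both the relative-distance bound (two distinct codewords need only differ on the tiny RM part plus whatever the canonical proofs happen to inherit) and the decoding-radius argument (an adversary with a global $\DecRad$ budget can erase the RM part entirely). The paper's construction repeats the RM codeword $\NumRep = \ceil{B\cdot\PCPPLen(2\RMLen^2)/\RMLen^\RMDim}$ times precisely to force the RM part to be at least half the block length; the corrector then samples a uniformly random copy and uses Markov's inequality to get a copy that is $4\DecRad = \TestRadius$-close to $\codeword^*$ with probability at least $1/2$. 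That sampling step is where both the factor $4$ in $\DecRad=\TestRadius/4$ and the factor $1/2$ in $\SoundnessRLCC$ actually come from; your alternative accounting (attributing a factor of $2$ to the repetition padding inside $\Word_{\mid\P}^{(\cdot)}$) does not recover it.

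Second, your plan for the proof branch cannot work as stated. You propose to argue that a global $\DecRad$-corruption ``still leaves the implicit input to the relevant inner RLCC within its own $2\Robust$-radius,'' but there is no such distance-transfer: a single proof segment $\Proof_{(\P^\star,\Index^\star)}$ is a negligible fraction of $\BlockLength$, so the adversary can replace it wholesale---for instance with the canonical proof of a \emph{different} bivariate polynomial $\codeword'$ that happens to be close to a locally corrupted $\Word_{\mid\P^\star}$---while staying within the global budget. The inner RLCC corrector would then confidently return symbols of the wrong canonical proof, violating relaxed correctness. The paper handles this by making soundness conditional on non-abortion: the corrector first runs the cPCPP verifier on $\Proof_{(\P^\star,\cdot)}$ (pinning down some polynomial $\codeword'$ with $\dist(\Word_{\mid\P^\star},\codeword'_{\mid\P^\star})\leq 2\Robust$), then runs a fresh $\CTRW$ from a uniformly random starting point $\Index_0\in\P^\star$ together with its cPCPP checks to certify that $\Word_{\mid\P^\star}$ agrees with the \emph{global} closest codeword $\codeword^*$ on more than a $1-\RMDist/2$ fraction of $\P^\star$, and concludes $\codeword'_{\mid\P^\star}=\codeword^*_{\mid\P^\star}$ from the distance of $\RM$. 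Only then is the inner RLCC corrector invoked. This extra verification is also what the query count $(\RMDim+3)\cdot\QueryPCPP$ and the $\RMDist$ factor in $\SoundnessRLCC$ are accounting for; your algorithm as described would neither need those queries nor achieve relaxed correctness.
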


Before proceeding with the proof of \cref{thm:composition-rw}, we show how it implies \cref{thm:main}.

\begin{proof}[Proof of \cref{thm:main}]
Given a sufficiently large parameter $\QueryRLCC$ specifying the desired query complexity of an RLCC,
let $\RMDim = \floor{\frac{\QueryRLCC}{\QueryPCPP}}-1 \geq 2$, and let $\RMDeg \geq 16^\RMDim$.
Choose a prime field $\H$, such that $(4 \RMDeg)^{1/\RMDim} \leq \abs{\H} \leq 2 \cdot (4 \RMDeg)^{1/\RMDim}$.
Finally, we let $\Field$ be the degree-$\RMDim$ extension of $\H$,
and let $\RMLen = \abs{\Field}$.
In particular, by the choice of $\RMDeg$ we have $\abs{\H} \geq 16$,
and the relative distance of $\RM_{\Field}(\RMDim, \RMDeg)$
is $\RMDist = 1 - \frac{\RMDeg}{\RMLen} = 1 - \frac{\RMDeg}{\abs{\H}^\RMDim} \geq 1 - \frac{\RMDeg}{4 \RMDeg} \geq 3/4$.

By \cref{thm:H-ctrw}, the $\RM_{\Field}(\RMDim,\RMDeg)$ admits
an $\RMDim$-steps $\H$-plane-line-$\CTRW$ with perfect completeness
and $(\TestRadius, \Robust, \SoundnessRW)$-robust soundness,
with $\TestRadius = \RMDist/2$, $\Robust = \RMDist/8$,
and soundness parameter
\begin{equation*}
\SoundnessRW
= \left(1 - \frac{4}{\abs{\Field}} \right)^{\RMDim} - \frac{\TestRadius + \frac{2}{\abs{\SubF}}}{\RMDist-2\Robust}
> 1 - \frac{4\RMDim}{\abs{\Field}} - \frac{\delta/2 + 1/8}{3\delta/4}
\geq 1 - \frac{4\RMDim}{4\cdot 16^\RMDim} - \frac{3/8 + 1/8}{9/16}
\geq \frac{1}{16} - \frac{m}{16^m}
= \Omega(1)
\enspace.
\end{equation*}
Furthermore, by a simple counting, the total number of predicates (of both types) defined for the $\CTRW$
is $B \leq 2\RMLen^\RMDim \cdot \abs{\H}^{2\RMDim} \cdot \RMLen^2 = 2\RMLen^{\RMDim + 4}$.

For the canonical PCPP component, by \cref{thm:cPCPP-for-RM} the ccPCPP system has
perfect completeness, constant soundness with respect to $\Robust = \RMDist/8$,
query complexity $\QueryPCPP = O(1/\Robust) = O(1/\RMDist) = O(1)$,
and the length of the canonical proof is $\PCPPLen(2\RMLen^2) = \poly(\RMLen)$.

Therefore, by \cref{thm:composition-rw} we obtain a $\QueryRLCC$-query
$(\DecRad,\SoundnessRLCC)$-RLCC
$\ComposedCode \colon \F^{\CodeDim} \to \F^{\BlockLength}$
with constant relative distance, $\DecRad = \Omega(1)$, $\SoundnessRLCC = \Omega(1)$,
where the message length of the code is $\CodeDim = {\RMDeg + \RMDim \choose \RMDim} \geq \left(\frac{\RMDeg}{\RMDim}\right)^\RMDim$,
and its block length is
$\BlockLength \leq \CodeLen^\RMDim + 2 B \cdot \PCPPLen(2\CodeLen^2)
\leq \CodeLen^\RMDim + 4\CodeLen^{\RMDim + 4} \cdot \poly(\CodeLen)$.
By plugging in the parameters we get
\begin{equation*}
    \BlockLength
    = \CodeLen^{\RMDim+O(1)}
    \leq (2^\RMDim \cdot 4\RMDeg)^{\RMDim + O(1)}
    = (4 \RMDim \cdot 2^\RMDim)^{\RMDim + O(1)} \cdot \left(\frac{\RMDeg}{\RMDim}\right)^{\RMDim + O(1)}
    = 2^{O(\RMDim^2)} \cdot \CodeDim^{1 + O(1/\RMDim)}
    \enspace,
\end{equation*}
and relative distance is at least $\frac{1}{2} \left(1 - \frac{\RMDeg}{\RMLen}\right) \geq 3/8$.
This completes the proof of \cref{thm:main}.
\end{proof}

The rest of this section is devoting to the proof of \cref{thm:composition-rw}.

\subsection{Constructing the composed code}
\label{sec:composition-construction}
\parhead{Constructing the composed code $\ComposedCode$:}
Given the components in the statement of \cref{thm:composition-rw}, the composed code $\ComposedCode \colon \F^\CodeDim \to \F^{\BlockLength}$
is obtained by concatenating several repetitions of the Reed-Muller encoding of the message with the canonical proofs of proximity.

Specifically, given a message $M \in \Field^\CodeDim$, we first let $\codeword_{\RM} = \RM(M)$ be encoding of $M$ using $\BaseCode$.
The final encoding $\ComposedCode(M)$ consists of the following three parts:
\begin{equation*}
    \ComposedCode(\message) = \RMPart \circ \ProofPart_{Point} \circ \ProofPart_{Line}
    \enspace,
\end{equation*}
descried below.
\begin{enumerate}
    \item \label{desc:composed-code-base} $\RMPart$ consists of
        $\NumRep = \ceil{B \cdot \PCPPLen(2\CodeLen^2) / \CodeLen^\RMDim}$ repetitions of $\codeword_{\RM}$,
        where $\NumRep \geq 1$ is the minimal integer so that $\NumRep \cdot \CodeLen^\RMDim \geq B \cdot \PCPPLen(2\CodeLen^2)$.
        Although these repetitions look rather artificial, they make sure that the Reed-Muller part of the encoding
        will constitute a constant fraction of the codeword $\ComposedCode(\message)$.

    \item \label{desc:composed-code-proofs-P-x}
        $\ProofPart_{Point}$ is the concatenation of proofs of proximity $\Proof_{(\P, \Point)}$
        (as per the ccPCPPs in the hypothesis of the theorem) for each $\H$-plane $\P$
        and for each point $\Point \in \P$.
        That is, each such $\Proof_{(\P, \Point)}$ is the canonical proof for the
        assertion that $(\codeword_{\RM})_{\mid \P}^{(\Point)} \in \RM_{\mid \P}^{(\Point)}$.

        Note that since $\ComposedCode(\message)$ contains many copies of $\codeword_{\RM}$,
        each $\Proof_{(\P, \Point)}$ is expected to be the canonical proof for all the copies.

    \item \label{desc:composed-code-proofs-P-ell}
        $\ProofPart_{Line}$ is the concatenation of proofs of proximity $\Proof_{(\P, \ell)}$
        (as per the ccPCPPs in the hypothesis of the theorem) for each $\H$-plane $\P$
        and for each line $\ell \seq \P$.
        Each such $\Proof_{(\P, \ell)}$ is the canonical proof for the
        assertion that $(\codeword_{\RM})_{\mid \P}^{(\ell)} \in \RM_{\mid \P}^{(\ell)}$.

        Again, since $\ComposedCode(\message)$ contains many copies of $\codeword_{\RM}$,
        each $\Proof_{(\P, \ell)}$ is expected to be the canonical proof for all the copies.
\end{enumerate}

\parhead{Parameters of $\ComposedCode$:}

Note that the total block length of the encoding is
$\BlockLength = \NumRep \cdot \CodeLen^\RMDim + B \cdot \PCPPLen(2\CodeLen^2)
\leq \CodeLen^\RMDim + 2B \cdot \PCPPLen(2\CodeLen^2)$.

As for the relative distance of $\ComposedCode$,
if the relative distance of $\RM_\Field(\RMDim,\RMDeg)$ is $\RMDist$,
then, the relative distance of the $\RMPart$ part is also $\RMDist$.
Furthermore, since the length of the $\RMPart$ part is at least half of the total block length,
it follows that the relative distance of $\ComposedCode$ is at least $\RMDist/2$.

\subsection{Local correction algorithm for the Reed-Muller part}
\label{sec:correction-alg-RM}
Below we present the local correcting algorithm for $\ComposedCode$ for the $\RMPart$ part of the code.
Given a word $\inputword \in \F^{\BlockLength}$
write $\inputword = \Word^{rep} \circ \ProofPart_{Point} \circ \ProofPart_{Line}$,
where $\Word^{rep}$ is (expected to be) the $\NumRep$ copies of some Reed-Muller codeword,
and $\ProofPart_{Point}, \ProofPart_{Line}$ are the proofs as described above.

Let $i \in [\BlockLength]$ be the coordinate in the $\RMPart$ part of the code,
which corresponds to some $\Index \in \Field^\RMDim$ of (one of the copies of) the Reed-Muller encoding.
The local correcting algorithm is described in \cref{alg:local-correction-RM}, and works as follows.

\medskip

\begin{algorithm}[H]
    \caption{Local correcting algorithm for the $\RMPart$ part}\label{alg:local-correction-RM}
    \SetKwFunction{proc}{$\Subroutine$}
        \KwIn{$\inputword = \Word^{rep} \circ \ProofPart_{Point} \circ \ProofPart_{Line}, i \in [\BlockLength]$}

        Let $\Index \in \Field^\RMDim$ be the index corresponding to the $i$'th coordinate of the $\RM$ encoding

        Sample $\Randomness \in [\NumRep]$ uniformly at random,
        and let $\Word \colon \Field^\RMDim \to \Field$ be the substring of $\Word^{rep}$
        corresponding to the $\Randomness$-th copy of the base codeword \label{alg:correcting-RM-sample-f}

        Run the $\RMDim$-steps $\H$-Line-Plane-$\CTRW$ from \cref{alg:H-CTRW} on the input $(\Word, \Index)$

        Let $\P_0,\P_1, \dots,\P_{\RMDim}$ be the planes sampled by $\CTRW$,
        and let $\ell_1,\dots,\ell_{\RMDim}$ be the sampled lines \label{alg:local-correction-RM:constraints}

        Run the cPCPP verifier on $\Proof_{(\P_0, \Index)}$ to check that
        $\Word_{\mid \P_0}^{(\Index)}$ is $\Robust$-close to $\RM_{\mid \P_0}^{(\Index)}$
        \label{alg:decoding-RM-PCPP-plane-point}

        \For{$j = 1$ \KwTo $\RMDim$}{
            Run the cPCPP verifier on $\Proof_{(\P_j, \ell_j)}$ to check that
            $\Word_{\mid \P_j}^{(\ell_j)}$ is $\Robust$-close to $\RM_{\mid \P_j}^{(\ell_j)}$
            \label{alg:decoding-RM-PCPP-plane-line}
        }

        \If {Step \ref{alg:decoding-RM-PCPP-plane-point} accepts and all iterations of Step \ref{alg:decoding-RM-PCPP-plane-line} accept}{
            \Return $\Word(\Index)$
        }
        \Else{
            \Return $\bot$
        }
\end{algorithm}

\parhead{Query complexity:}
The total number of queries made in \cref{alg:local-correction-RM} is clearly
upper bounded by $(\RMDim+1) \cdot \QueryPCPP$
from Lines \ref{alg:decoding-RM-PCPP-plane-point} and \ref{alg:decoding-RM-PCPP-plane-line} of the algorithm.

\parhead{Proof of correctness:}
By the description of the algorithm, it is clear that if the input is a non-corrupted codeword,
i.e., $\inputword \in \ComposedCode$ then for any $\Index^* \in \Field^\RMDim$,
the algorithm always returns the correct answer.

Now, assume that the input
$\inputword = \Word^{rep} \circ \ProofPart_{Point} \circ \ProofPart_{Line}$ is $\DecRad$-close to some codeword $\ClosestCodeword \in \ComposedCode$,
and suppose that the $\RMPart$ part of $\ClosestCodeword$ consists of $\NumRep$ copies of some degree-$\RMDeg$ polynomial $\codeword^* \in \RM$.
We will show that $\Pr[\Decoder_{\cref{alg:local-correction-RM}}^{\inputword}(i) \in \{\codeword^*(\Index), \bot\}] \geq \SoundnessRLCC$.

Note that since $\inputword$ is $\DecRad$-close to $\ClosestCodeword$,
and the length of $\Word^{rep}$ is at least 1/2 of the total block length,
it follows that $\Word^{rep}$ is $2\DecRad$-close to the $\NumRep$ repetitions of $\codeword^*$.
Denote $\Wclose$ to be the event that $\dist(\Word, \codeword^*) \leq 4\DecRad = \TestRadius$
for the random copy $\Word$ in the $\RMPart$ part sampled in Line \ref{alg:correcting-RM-sample-f} of the algorithm.
Then, by Markov's inequality $\Pr[\Wclose] \geq 1/2$.
Therefore,
\begin{equation*}
        \Pr[\Decoder_{\cref{alg:local-correction-RM}}^{\inputword}(i) \in \{\codeword^*(\Index), \bot\} ]
        \geq 1/2 \cdot \Pr[ \Decoder_{\cref{alg:local-correction-RM}}^{\inputword}(i) \in \{\codeword^*(\Index), \bot\} | \Wclose]
      \enspace.
\end{equation*}

From now on, let us condition on the event $\Wclose$,
and focus on the term $\Pr[ \Decoder_{\cref{alg:local-correction-RM}}^{\inputword}(i) \in \{\codeword^*(\Index), \bot\} | \Wclose]$.
Furthermore, let us fix the choice of $\Word$ and consider the following two cases.
\begin{description}
  \item[Case 1: $\Word(\Index) = \codeword^*(\Index)$.] Noting that $\Decoder_{\cref{alg:local-correction-RM}}^{\inputword}(i)$
  always outputs either $\Word(\Index)$ or $\bot$ it follows that by conditioning on such $\Word$ we have
\begin{equation*}
    \Pr[ \Decoder_{\cref{alg:local-correction-RM}}^{\inputword}(i) \in \{\codeword^*(\Index), \bot\} | \Wclose, \Word(\Index) = \codeword^*(\Index)] = 1
    \enspace.
\end{equation*}

  \item[Case 2: $\Word(\Index) \neq \codeword^*(\Index)$.] In this case
  since $\CTRW$ admits $(\TestRadius, \Robust, \SoundnessRW)$-robust soundness,
  it follows that
        \begin{equation*}
            \Pr[\textrm{$\dist_{\Index}(\Word_{\mid \P_0}, \RM_{\mid \P_0}) \geq \Robust
            \vee
            \exists j \in [\RMDim]$ such that $\dist_{\ell_j}(\Word_{\mid \P_j}, \RM_{\mid \P_j}) \geq \Robust$}]
            \geq \SoundnessRW
            \enspace.
        \end{equation*}
\end{description}
Therefore, when running the cPCPP verifier for which the local view is $\Robust$-far from the corresponding predicate,
the verifier will reject with probability at least $1 - \SoundnessPCPP$,
and hence the decoder will output $\bot$ with the same probability.
Therefore, we can lower bound the second term by
\begin{equation*}
    \Pr[ \Decoder_{\cref{alg:local-correction-RM}}^{\inputword}(i) \in \{\codeword^*(\Index), \bot\} | \Wclose, \Word(\Index) \neq \codeword^*(\Index)]
    \geq \SoundnessRW (1-\SoundnessPCPP)
    \enspace.
\end{equation*}
Putting all together, we conclude
\begin{equation*}
        \Pr[\Decoder_{\cref{alg:local-correction-RM}}^{\inputword}(i) \in \{\codeword^*(\Index), \bot\}
        \geq \frac{\SoundnessRW \cdot (1-\SoundnessPCPP)}{2} \geq \SoundnessRLCC
        \enspace.
\end{equation*}
This completes the proof of correctness of the algorithm for the $\RMPart$ part of the code.

\subsection{Local correction algorithm for the proof part}
\label{sec:correction-alg-PCPP}
Next, we present the correction algorithm for the cPCPP proofs part of the code.
Let $\inputword = \Word^{rep} \circ \ProofPart_{Point} \circ \ProofPart_{Line} \in \Field^\BlockLength$
be a given word, and let $i \in [\BlockLength]$ be a coordinate
in the $\ProofPart_{Point} \circ \ProofPart_{Line}$ part of the proof.
The correction algorithm is described in \cref{alg:local-correction-PCPP}, and works as follows.

\begin{algorithm}[H]
\caption{Local correction for the PCPP part $\ProofPart$}\label{alg:local-correction-PCPP}
    \KwIn{$\inputword = \Word^{rep} \circ \ProofPart_{Point} \circ \ProofPart_{Line}, i \in [\BlockLength]$}

    Sample $\Randomness \in [\NumRep]$ uniformly at random,
    and let $\Word \colon \Field^\RMDim \to \Field$ be the substring of $\Word^{rep}$
    corresponding to the $\Randomness$-th copy of the base codeword \label{alg:RLCC-i-pcpp-sample-f}

    \If {$i$ is a coordinate in $\ProofPart_{Point}$}{
        Let $\P^\star$ and $\Index^\star \in \P^\star$ be the plane and the point such that $i$ is a coordinate of $\Proof = \Proof_{(\P^\star,\Index^\star)}$

        Run the cPCPP verifier to check that $\dist(\Word_{\mid \P^\star}^{(\Index^\star)}, \RM_{\mid \P^\star}^{(\Index^\star)}) \leq \Robust$ \label{alg:RLCC-i-pcpp-check-proof-x}

        \If {Step \ref{alg:RLCC-i-pcpp-check-proof-x} rejects}{
                \Return $\bot$ \label{alg:RLCC-i-pcpp-check-proof-x-ret}
        }
    }

    \Else{
        Let $\P^\star$ and $\ell^\star \seq \P^\star$ be the plane and the line such that $i$ is a coordinate of $\Proof = \Proof_{(\P^\star,\ell^\star)}$

        Run the cPCPP verifier to check that $\dist(\Word_{\mid \P^\star}^{(\ell^\star)}, \RM_{\mid \P^\star}^{(\ell^\star)}) \leq \Robust$ \label{alg:RLCC-i-pcpp-check-proof-ell}

        \If {Step \ref{alg:RLCC-i-pcpp-check-proof-ell} rejects}{
                \Return $\bot$ \label{alg:RLCC-i-pcpp-check-proof-ell-ret}
        }
    }

        Choose a uniformly random $\Index_0 \in \P^\star$ 

        Run the $\RMDim$-steps $\H$-Line-Plane-$\CTRW$ from \cref{alg:H-CTRW} on the input $(\Word, \Index_0)$ \label{alg:RLCC-i-pcpp-run-H-CTRW}

        Let $\P_0,\P_1, \dots,\P_{\RMDim}$ be the planes sampled by $\CTRW$,
        and let $\ell_1,\dots,\ell_{\RMDim}$ be the sampled lines

        Run the cPCPP verifier on $\Proof_{(\P_0, \Index_0)}$ to check that
        $\Word_{\mid \P_0}^{(\Index_0)}$ is $\Robust$-close to $\RM_{\mid \P_0}^{(\Index_0)}$
        \label{alg:RLCC-i-pcpp-plane-point}

        \For{$j = 1$ \KwTo $\RMDim$}{
            Run the cPCPP verifier on $\Proof_{(\P_j, \ell_j)}$ to check that
            $\Word_{\mid \P_j}^{(\ell_j)}$ is $\Robust$-close to $\RM_{\mid \P_j}^{(\ell_j)}$
            \label{alg:RLCC-i-pcpp-plane-line}
        }

        \If {Steps \ref{alg:RLCC-i-pcpp-plane-point} or \ref{alg:RLCC-i-pcpp-plane-line} reject} {
                \Return $\bot$  \label{alg:RLCC-i-pcppl-RW-fails}
        }

    \If {$i$ is a coordinate in $\ProofPart_{Point}$} {
        Run the local corrector of the inner ccPCPP on $\Word_{\mid \P^\star}^{(\Index^\star)} \circ \Proof_{(\P^\star,\Index^\star)}$
        to correct $\inputword_i$ \label{alg:RLCC-i-pcpp-correct-x}

        \Return the value obtained in Step \ref{alg:RLCC-i-pcpp-correct-x} \label{alg:RLCC-i-pcpp:step:PCPP-x-corr-return}
    }
    \Else{
        Run the local corrector of the inner ccPCPP on $\Word_{\mid \P^\star}^{(\ell^\star)} \circ \Proof_{(\P^\star,\ell^\star)}$
        to correct $\inputword_i$ \label{alg:RLCC-i-pcpp-correct-ell}

        \Return the value obtained in Step \ref{alg:RLCC-i-pcpp-correct-ell} \label{alg:RLCC-i-pcpp:step:PCPP-ell-corr-return}
    }
\end{algorithm}

\parhead{Query complexity:}
The total number of queries is upper bounded by
\begin{inparaenum}[(i)]
\item $\QueryPCPP$ queries in Step \ref{alg:RLCC-i-pcpp-check-proof-x} or in Step \ref{alg:RLCC-i-pcpp-check-proof-ell},
\item at most $(\RMDim+1) \cdot \QueryPCPP$ queries in Steps \ref{alg:RLCC-i-pcpp-plane-point} and \ref{alg:RLCC-i-pcpp-plane-line},
and
\item at most $\QueryPCPP$ queries in Step~\ref{alg:RLCC-i-pcpp-correct-x} or Step \ref{alg:RLCC-i-pcpp-correct-ell} .
\end{inparaenum}
Therefore, the total query complexity is upper bounded by $(\RMDim+3) \cdot \QueryPCPP$, as required.

\parhead{Proof of correctness:}
By the description of the algorithm, it is clear that if $\inputword \in \ComposedCode$,
then for any index $i \in [\BlockLength]$ in the proof part, the algorithm always returns the correct answer $\inputword_i$.

We assume from now on that the input $\inputword \in \F^{\BlockLength}$ is $\DecRad$-close to some codeword $\ClosestCodeword \in \ComposedCode$,
and suppose that the $\RMPart$ part of $\ClosestCodeword$ consists of $\NumRep$ copies of some degree-$\RMDeg$ polynomial $\codeword^* \in \RM$.
As in the previous part, since $\inputword$ is $\DecRad$-close to $\ClosestCodeword$,
and the length of $\Word^{rep}$ is at least 1/2 of the total block length,
it follows that $\Word^{rep}$ is $2\DecRad$-close to the $\NumRep$ repetitions of $\codeword^*$.
Therefore, for the random copy $\Word$ in the $\RMPart$ part sampled in Line \ref{alg:RLCC-i-pcpp-sample-f} of the algorithm,
we have $\Pr[\dist(\Word, \codeword^*) \leq 4\DecRad = \TestRadius] \geq 1/2$.
From now on let us condition on the event $\dist(\Word, \codeword^*) \leq \TestRadius$.

\medskip

Let us assume that the coordinate $i \in \BlockLength$ we wish to decode belongs to some $\Proof_{(\P^\star,\Index^\star)}$.
The following claim completes the analysis of the correcting algorithm.

\begin{claim}
If $\Pr[\Decoder_{\cref{alg:local-correction-PCPP}}^{\inputword}(i) = \bot] < \SoundnessRLCC$,
then $\Pr[ \Decoder_{\cref{alg:local-correction-PCPP}}^{\inputword}(i) = \ClosestCodeword_i] > \frac{\SoundnessInnerRLCC}{2}$.
\end{claim}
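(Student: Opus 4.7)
The plan is to reduce the claim to the $(2\Robust, \SoundnessInnerRLCC)$-RLCC property of the inner ccPCPP code $\Pi_{\RM_{\mid \P^\star}^{(\Index^\star)}}$ from \cref{thm:cPCPP-for-RM}. Let $y^\star \DefineEqual (\codeword^*_{\mid \P^\star})^{(\Index^\star)} \circ \Proof\bigl((\codeword^*_{\mid \P^\star})^{(\Index^\star)}\bigr)$; this is the correct inner codeword whose $i$-th coordinate equals $\ClosestCodeword_i$. I will show that, except on a small event, whenever the algorithm reaches Step~\ref{alg:RLCC-i-pcpp-correct-x} the input pair $(\Word_{\mid \P^\star}^{(\Index^\star)}, \Proof_{(\P^\star,\Index^\star)})$ that it passes to the inner corrector is $2\Robust$-close to $y^\star$. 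The inner RLCC will then give $\Pr[\text{output} \in \{\ClosestCodeword_i, \bot\}] \geq \SoundnessInnerRLCC$, and subtracting the hypothesis $\Pr[\text{output} = \bot] < \SoundnessRLCC \leq \SoundnessInnerRLCC/2$ yields $\Pr[\text{output} = \ClosestCodeword_i] > \SoundnessInnerRLCC/2$.

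The structural heart of the argument combines canonical PCPP soundness with the CTRW robustness. First, by canonical soundness (\cref{def:canonical-PCPP}, using the $\delta$-distance of \cref{strong-pcpp:eq}) of the verifier in Step~\ref{alg:RLCC-i-pcpp-check-proof-x}, except with probability at most $\SoundnessPCPP$ its acceptance forces $(\Word_{\mid \P^\star}^{(\Index^\star)}, \Proof_{(\P^\star,\Index^\star)})$ to be $\Robust$-close to a canonical pair $(Q^{(\Index^\star)}, \Proof(Q^{(\Index^\star)}))$ for some degree-$\RMDeg$ bivariate polynomial $Q$ on $\P^\star$; splitting the two halves of $Q^{(\Index^\star)}$ then gives $\dist(\Word_{\mid \P^\star}, Q) \leq 2\Robust$. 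The favorable case is $Q = \codeword^*_{\mid \P^\star}$, in which the input is automatically $2\Robust$-close to $y^\star$ and we are done. Otherwise $Q$ and $\codeword^*_{\mid \P^\star}$ are distinct low-degree bivariate polynomials on $\P^\star$ and hence disagree on at least a $\RMDist$-fraction of the plane; together with $\dist(\Word_{\mid \P^\star}, Q) \leq 2\Robust$, the uniformly random $\Index_0 \in \P^\star$ chosen in Step~\ref{alg:RLCC-i-pcpp-run-H-CTRW} satisfies $\Word(\Index_0) \neq \codeword^*(\Index_0)$ with probability at least $\RMDist - 2\Robust$. Conditioned on this, the $(\TestRadius, \Robust, \SoundnessRW)$-robust soundness of the $\H$-plane-line $\CTRW$ from \cref{thm:H-ctrw} produces some predicate among $\P_0, \ell_1, \dots, \ell_\RMDim$ that is $\Robust$-far from its Reed-Muller language with probability at least $\SoundnessRW$; the matching cPCPP verifier invoked in Step~\ref{alg:RLCC-i-pcpp-plane-point} or~\ref{alg:RLCC-i-pcpp-plane-line} then rejects with probability at least $1 - \SoundnessPCPP$, so the algorithm aborts. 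In particular, conditional on the wrong-$Q$ case, the algorithm outputs $\bot$ with probability at least $(\RMDist - 2\Robust)\SoundnessRW(1-\SoundnessPCPP) \geq (3/2)\SoundnessRLCC$ by the definition of $\SoundnessRLCC$ and $\Robust \leq \RMDist/8$.

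The part I expect to be the main obstacle is the probability bookkeeping needed to convert these conditional per-case bounds into an unconditional closeness guarantee compatible with the hypothesis $\Pr[\text{output}=\bot] < \SoundnessRLCC$. Concretely, I plan to bound the ``reach Step~\ref{alg:RLCC-i-pcpp-correct-x} with input $>2\Robust$-far from $y^\star$'' mass by summing three pieces: (i) the $\SoundnessPCPP$ acceptance-by-luck of the outer PCPP (contributing at most $\SoundnessPCPP$), (ii) the wrong-$Q$ contribution, whose probability is suppressed by the $(3/2)\SoundnessRLCC$ abort derived above and is therefore at most $\Pr[\text{output}=\bot]/((3/2)\SoundnessRLCC)$ of its natural mass, and (iii) the inner corrector's own abort slack, again absorbed by $\Pr[\text{output}=\bot]$. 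Verifying that the residual leaves $\Pr[\text{output} \in \{\ClosestCodeword_i,\bot\}] \geq \SoundnessInnerRLCC$, all while respecting the earlier conditioning on $\Wclose$, is what I view as the only genuinely non-routine step; the rest of the argument follows the same template as the Reed-Muller correction analysis in \cref{sec:correction-alg-RM}.
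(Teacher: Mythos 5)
Your proposal follows the same high-level architecture as the paper: extract a polynomial $Q$ on $\P^\star$ from canonical PCPP soundness of Step~\ref{alg:RLCC-i-pcpp-check-proof-x}, use the $\H$-plane-line $\CTRW$ launched from a uniformly random $\Index_0 \in \P^\star$ to tie $Q$ to $\codeword^*_{\mid \P^\star}$, then invoke the $(2\Robust,\SoundnessInnerRLCC)$-RLCC property of the inner ccPCPP on the pair $(\Word_{\mid \P^\star}^{(\Index^\star)}, \Proof_{(\P^\star,\Index^\star)})$. One remark on phrasing: canonical soundness (\cref{def:canonical-PCPP}) gives a \emph{deterministic} conclusion about the pair once its acceptance probability exceeds $\SoundnessPCPP$; the paper invokes it as ``acceptance with probability $1-\SoundnessRLCC > \SoundnessPCPP$ implies closeness to a canonical pair,'' not as a per-run event with probability $\leq \SoundnessPCPP$.

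The genuine departure, and where your route does not close as sketched, is the middle step. The paper shows directly that for at least a $(1-\RMDist/2)$-fraction of starting points $\vecz \in \P^\star$ we have $p_\vecz \geq 1 - \SoundnessRW(1-\SoundnessPCPP)$ (by Markov over $\vecz$ applied to the abort bound), and then Case~2 of the $\RMPart$-analysis forces $\Word(\vecz) = \codeword^*(\vecz)$ for all such $\vecz$, giving $\dist(\Word_{\mid \P^\star}, \codeword^*_{\mid \P^\star}) < \RMDist/2$; combining with $\dist(\Word_{\mid \P^\star}, Q) \leq 2\Robust$ and the distance $\RMDist$ of bivariate $\RM$ forces $Q = \codeword^*_{\mid \P^\star}$ \emph{exactly}. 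Your case split instead derives ``wrong $Q$ $\Rightarrow$ abort probability $\geq (\RMDist - 2\Robust)\SoundnessRW(1-\SoundnessPCPP) \geq \tfrac{3}{2}\SoundnessRLCC$'' and then tries to bound the wrong-$Q$ mass over $\Randomness$. But the bookkeeping you flag is exactly the obstruction: from $\Pr[\bot]<\SoundnessRLCC$ and the $\tfrac{3}{2}\SoundnessRLCC$ factor you only get $\Pr_{\Randomness}[\text{wrong }Q] < 2/3$, which leaves $\Pr[\text{output}\in\{\ClosestCodeword_i,\bot\}] \gtrsim \tfrac{1}{3}\SoundnessInnerRLCC$; after subtracting $\Pr[\bot]<\SoundnessRLCC\leq\SoundnessInnerRLCC/2$ this is not positive, let alone $>\SoundnessInnerRLCC/2$. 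The paper avoids this by keeping the conclusion deterministic per fixed choice of $\Word$ satisfying $\Wclose$ and the conditional abort bound, so there is no residual ``wrong-$Q$ probability'' to account for; the only factor of $1/2$ in the end comes from $\Pr[\Wclose]\geq 1/2$, not from leftover wrong-$Q$ mass. If you want to keep your case-split route, you should likewise treat $Q$ as a deterministic function of the fixed copy $\Word$, observe that the wrong-$Q$ case would violate the (per-$\Word$) abort bound outright, and never introduce a probabilistic bound on the wrong-$Q$ event.
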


\begin{proof}
    Since \cref{alg:local-correction-PCPP} returns $\bot$ with probability
    less than $\SoundnessRLCC$ in Step \ref{alg:RLCC-i-pcpp-check-proof-x-ret},
    the PCPP verifier for $\Proof_{(\P^\star, \Index^\star)}$ in Step \ref{alg:RLCC-i-pcpp-check-proof-x}
    accepts with probability at least $1-\SoundnessRLCC > \SoundnessPCPP$.
    Thus, there is some bivariate degree-$\RMDeg$ polynomial $\codeword' \colon \P^\star \to \Field$ (not necessarily equal to $\codeword^*_{\mid \P^\star}$)
    such that
    \begin{enumerate}
        \item $\dist(\Word_{\mid \P^\star}^{(\Index^\star)}, {\codeword'}_{\mid \P^\star}^{(\Index^\star)}) \leq \Robust$,
        and hence $\dist(\Word_{\mid \P^\star}, \codeword'_{\mid \P^\star} )\leq 2\Robust$,
        \item and $\Proof_{(\P^\star, \Index^\star)}$ is $\Robust$-close to the canonical proof $\Proof({\codeword'}_{\mid \P^\star}^{(\Index^\star)})$.
    \end{enumerate}

    Next, we use the assumption that \cref{alg:local-correction-PCPP} returns $\bot$ with probability less than $\SoundnessRLCC$ in Steps \ref{alg:RLCC-i-pcpp-plane-point} or \ref{alg:RLCC-i-pcpp-plane-line}.
    That is, when running $\H$-plane-line $\CTRW$ from a uniformly random $\Index_0 \in \P^\star$,
    and then running the corresponding cPCPPs,
    with probability at least $1-\SoundnessRLCC$ all cPCPP verifiers accept.
    For each $\vecz \in \P^\star$ let $p_{\vecz}$ be the probability that
    both Step~\ref{alg:RLCC-i-pcpp-plane-point} and Step~\ref{alg:RLCC-i-pcpp-plane-line}
    accept when starting from $\vecz$.
    Then $\E[p_{\vecz}] > 1 - \SoundnessRLCC$,
    and hence for at least $(1 - \DistRM/2)\CodeLen^2$ starting points $\vecz \in \P^\star$
    we have $p_{\vecz} \geq 1 - \frac{2\SoundnessRLCC}{\DistRM} \geq 1 - \SoundnessRW (1-\SoundnessPCPP)$.
    Therefore, by the analysis of the correcting algorithm for the $\RMPart$ part,
    for more than $(1 - \DistRM/2)\CodeLen^2$ starting points $\vecz \in \P^\star$
    it holds that $\Word(\vecz) = \codeword^*(\vecz)$.
    Indeed, by \emph{case 2} of the analysis if $\Word(\vecz) \neq \codeword^*(\vecz)$,
    then $p_{\vecz} < 1 - \SoundnessRW (1-\SoundnessPCPP)$.
    Therefore, $\dist(\Word_{\mid \P^\star}, \codeword^*_{\mid \P^\star}) < \DistRM/2$.

  Combining with the conclusion from the previous step that $\dist(\Word_{\mid \P^\star}, \codeword'_{\mid \P^\star} )\leq 2\Robust$
  it follows that $\dist(\codeword^*_{\mid \P^\star}, \codeword'_{\mid \P^\star}) < 2 \Robust + \DistRM/2 \leq \DistRM$.
  Thus, since $\RM_\Field(\RMDim,\RMDeg)$ has distance $\DistRM$ we conclude that $\codeword^*_{\mid \P}= \codeword'_{\mid \P}$.

  \medskip

  So far we showed that if \cref{alg:local-correction-PCPP} returns $\bot$ with probability less than $\SoundnessRLCC$
  and $\Word$ is $\TestRadius$-close to $\codeword^*$ (which happens with probability at least $1/2$),
  then $\Word_{\mid \P^\star}$ is $2\Robust$-close to $\codeword^*_{\mid \P^\star}$, and $\Proof_{(\P^\star, \Index^\star)}$
  is $\Robust$-close to $\Proof({\codeword^*}_{\mid \P^\star}^{(\Index^\star)})$, the canonical proof of ${\codeword^*}_{\mid \P^\star}^{(\Index^\star)}$.
  Therefore, the local correction algorithm for the inner ccPCPP applied on $(\Word_{\mid \P^\star} \circ \Proof_{(\P^\star, \Index^\star)})$
  in Step \ref{alg:RLCC-i-pcpp-correct-x} returns either $\ClosestCodeword_i$ or $\bot$ with probability at least $\SoundnessInnerRLCC$.
  Therefore,
\begin{align*}
        \Pr[ \Decoder_{\cref{alg:local-correction-PCPP}}^{\inputword}(i) \in \{\ClosestCodeword_{i}, \bot\}]
        &\geq \Pr[ \text{Step \ref{alg:RLCC-i-pcpp:step:PCPP-x-corr-return}
                returns $\ClosestCodeword_i$ or $\bot$} | \dist(\Word, \codeword^*) \leq \TestRadius]
         \cdot \Pr[\dist(\Word, \codeword^*) \leq \TestRadius] \\
        &\geq \frac{\SoundnessInnerRLCC}{2}
        \enspace,
  \end{align*}
  as required.
\end{proof}

We proved correctness of the local correction algorithm assuming that the coordinate $i \in \BlockLength$ we wish to decode belongs to some $\Proof_{(\P^\star,\Index^\star)}$.
For the case when $i$ belongs to $\Proof_{(\P^\star,\ell^\star)}$, the analysis is exactly the same.
This concludes the proof of \cref{thm:composition-rw}.

\doclearpage
\section{Concluding remarks and open problems}
\label{sec:conclusions-open-problems}

In this paper we constructed an $O(\Query)$-query RLDC
$\Code \colon \Field^\CodeDim \to \Field^\BlockLength$ with block length
$\BlockLength=\Query^{O(\Query^2)} \cdot \CodeDim^{1+O(1/\Query)}$,
assuming that the field is large enough, namely, assuming that $\abs{\Field} \geq c_q \cdot \CodeDim^{1/\Query}$.
Using standard techniques it is possible to obtain a binary RLDC with similar parameters.
This can be done by concatenating our code with an arbitrary binary code with constant rate and constant relative distance.
Indeed, this transformation appears in~\cite[Appendix A]{CGS20},
who showed how concatenating $\CTRW$-based RLDC over large alphabet with a good binary code
gives a binary RLDC that essentially inherits the block length and the query complexity of the RLDC over large alphabet.
Below we provide the proof sketch, explaining how the concatenation works.

\begin{proof}[Proof sketch]
Suppose that we want to construct a short binary RLCC.
Let $\Code_{RLCC}: \F^{\CodeDim} \to \F^{\BlockLength}$ be the RLCC over some field $\Field$ with the desired block length,
and let $\Code_{bin}: \Bits^{\CodeDim'} \to \Bits^{\BlockLength'}$ be an error-correcting code with constant rate and constant distance.
We also assume that field $\F$ is chosen so that $\abs{\F} = 2^{\CodeDim'}$.
(To satisfy this condition, one can simply set $\H$ to be a field of characteristic 2.)
This assumption will allow us to have a bijection between each symbol
of $\F$ and binary string of length $\CodeDim'$.

We construct the binary concatenated code $\Code_{concat}: \{0,1\}^{\CodeDim \cdot \CodeDim'} \to
\{0,1\}^{\BlockLength \cdot \BlockLength'}$ as follows. Given a message $\message \in \{0,1\}^{\CodeDim \cdot \CodeDim'}$, we first
convert it to an string in $\message' \in \F^{\CodeDim}$ in the natural way.
Then, we encode $\message'$ using $\Code_{RLCC}$ to obtain a codeword $c^* \in \Code_{RLCC}$. Finally, we encode each symbol of $c^*$
using $\Code_{bin}$ to get the final codeword $c \in \{0,1\}^{\BlockLength \cdot \BlockLength'}$.

To prove that the concatenated code is an RLCC, Chiesa, Gur, and Shinkar proved in \cite[Theorem A.4]{CGS20}
that if $\Code_{RLCC}$ admits an $\RWSteps$-steps $\CTRW$ with some soundness guarantees,
then $\Code_{concat}$ admits an $\RWSteps$-steps $\CTRW$ with related soundness guarantees.
The $\CTRW$ on the concatenated code $\Code_{concat}$ emulates the $\CTRW$ on $\Code_{RLCC}$
by sampling planes for the $\CTRW$ on the Reed-Muller code,
and instead of reading the symbols from $\Field$, it reads the binary encodings of all symbols belonging to these planes.

Indeed, it is not difficult to see that if $\Code_{RLCC}$ admits an $\RWSteps$-steps $\CTRW$ with some soundness guarantees,
then so does the concatenated code.
We omit the details, and refer the interested reader to Appendix~A in~\cite{CGS20}.
\end{proof}

We conclude the paper with several open problems we leave for future research.

\begin{enumerate}

  \item The most fundamental open problem regarding RLDCs/RLCCs is to understand the optimal trade-off between
  the query complexity of LDCs and their block length in the constant query regime.
  It is plausible that the lower bound of \cite{GurL19} can be improved to $\CodeDim^{1 + \Omega(1/\Query)}$,
  although we do not have any evidence for this.

  \item As discussed in the intoduction, \cite{BGHSV06} asked whether it is possible to prove a separation between LDCs and RLDCs.
  Understanding the trade-off between the query complexity and the block length is one possible way to show such separation.

  \item Another interesting open problem is to construct an RLDC/RLCC with constant rate and small query complexity.
  In particular, it is plausible that there exist $\polylog(\BlockLength)$-query RLDCs with $\BlockLength = O(\CodeDim)$.

  \item Also, it would be interesting to construct RLDCs/RLCCs using high-dimensional expanders \cite{KM17, DK17, DDFH18, KO18}.
  Since there are several definitions of high-dimensional expanders,
  it would be interesting to state the sufficient properties of high-dimensional expanders required for RLDCs.
  We believe this approach can be useful in constructing constant rate RLDCs with small query complexity.
\end{enumerate}

\bibliographystyle{alpha}
\bibliography{references}

\end{document}